\theoremstyle{plain}
\newtheorem{theorem}{Theorem}
\newtheorem{lemma}[theorem]{Lemma}
\newtheorem{claim}[theorem]{Claim}
\newtheorem{corollary}[theorem]{Corollary}
\newenvironment{proofof}[1]{ {\noindent \em Proof of #1.}\/}{\hfill\qedsymbol\bigskip}
\newcommand{\remove}[1]{}
\newcommand{\suppress}[1]{}
\DeclareMathOperator{\weights}{weights}
\DeclareMathOperator{\budget}{budget}
\DeclareMathOperator{\greedy}{greedy}
\DeclareMathOperator{\MST}{MST}
\title{Budget and Profit Approximations for Spanning Tree Interdiction}
\author{Rafail Ostrovsky\thanks{University of California, Los Angeles, {\tt rafail@cs.ucla.edu}.
Distribution Statement "A" (Approved for Public Release, Distribution Unlimited).
This material is based upon work supported by the Defense Advanced Research Projects Agency (DARPA) under Contract No. HR001123C0029. 
 Any opinions, findings and conclusions or recommendations expressed in this material are those of the author(s) and do not necessarily reflect the views of the Defense Advanced Research Projects Agency (DARPA).}\and
             Yuval Rabani\thanks{The Rachel and Selim Benin School of Computer Science and Engineering,
		The Hebrew University of Jerusalem, Jerusalem, Israel, {\tt yrabani@cs.huji.ac.il}.
		Research supported in part by ISF grants 3565-21 and 389-22, and by BSF grant 2023607.}\and
             Yoav Siman Tov\thanks{The Rachel and Selim Benin School of Computer Science and Engineering,
		The Hebrew University of Jerusalem, Jerusalem, Israel, {\tt yoav.simantov1@mail.huji.ac.il}.}}
\date{\today}
\begin{document}

\maketitle

\begin{abstract}
We give polynomial time  logarithmic approximation guarantees for the budget minimization,
as well as for the profit maximization versions of minimum spanning tree interdiction. In 
this problem, the goal is to remove some
edges of an undirected graph with edge weights and edge costs, so as to increase the weight
of a minimum spanning tree. In the budget minimization version, the goal is to minimize
the total cost of the removed edges, while achieving a desired increase $\Delta$ in the
weight of the minimum spanning tree. An alternative objective within the same framework
is to maximize the profit of interdiction, namely the increase in the weight of the minimum
spanning tree, subject to a budget constraint. There are known polynomial time $O(1)$
approximation guarantees for a similar objective (maximizing the total cost of the tree,
rather than the increase). However, the guarantee does not seem to apply to the increase
in cost. Moreover, the same techniques do not seem to apply to the budget version.

Our approximation guarantees are motivated by studying the question of minimizing
the cost of increasing the minimum spanning tree by any amount. We show that in
contrast to the budget and profit problems, this version of interdiction is polynomial
time-solvable, and we give an efficient algorithm for solving it. The solution motivates a
graph-theoretic relaxation of the NP-hard interdiction problem. The gain in minimum
spanning tree weight, as a function of the set of removed edges, is super-modular. Thus,
the budget problem is an instance of minimizing a linear function subject to a super-modular
covering constraint. We use the graph-theoretic relaxation to design and analyze a
batch greedy-based algorithm.
\end{abstract}

\thispagestyle{empty}
\newpage
\setcounter{page}{1}

\section{Introduction}

\paragraph{Problem statement and results.}
This paper deals with spanning tree interdiction. The basic setting is an undirected finite graph
with edge weights and edge costs. By removing some edges, the weight of a minimum spanning
tree can be increased, at a cost equal to the sum of costs of the removed edges. This setting
gives rise to various optimization formulations. We consider primarily the following case, which
we call the budget problem: we are given a target $\Delta$ by which to increase the weight of a
minimum spanning tree, and the optimization objective is to minimize the cost of doing so.
In the alternative profit problem, we are given a budget $B$, and the optimization objective is
to maximize the damage, namely the increase in the weight of a minimum spanning tree, subject
to the cost not exceeding $B$.

Both problems are NP-hard (see~\cite{LC93}). Our main result is a polynomial time
$O(\log n)$ approximation algorithm for the budget problem. The approximation algorithm
is motivated by considering the following special case: minimize the cost of increasing the
weight of a minimum spanning tree, by any amount. We show that in contrast with the general
budget problem, this objective can be optimized in polynomial time, and we give an efficient
algorithm for computing the optimum. We also give $O(\log n)$-approximation guarantees for
the profit problem. Finally, we investigate the question of defending against interdiction 
by adding edges to the input graph. The set of edges to choose from is given, and each edge 
is endowed with a cost of constructing it.

\paragraph{Motivation.}
The primary application of interdiction computations is to examine the sensitivity of
combinatorial optimization solutions to partial destruction of the underlying structure.
This can be used either to detect vulnerabilities in desirable structures, or to utilize
vulnerabilities to impair undesirable structures. The budget problem is perhaps more
suitable in the former setting, as its solution indicates the cost of inflicting a (dangerous)
level of damage. The profit problem is perhaps more suitable in the latter setting,
as it aims to maximize the damage inflicted using limited resources. Such problems
arise in a variety of application areas, including military planning, infrastructure protection,
law enforcement, epidemiology, etc. (see, for example the references in~\cite{Zen15}).

\paragraph{Related work.}
Previous work on spanning tree interdiction focuses exclusively on a version of the
profit problem. It approximates the total weight of the post-interdiction minimum spanning
tree, rather than the increase $\Delta$ in the weight of the tree as per the above definition
of the profit problem. Notice that if the resulting tree has total weight which is $C$ times
the weight of the initial tree, then approximating the total weight by any factor at least $C$
means the algorithm could end up doing nothing. In contrast, approximation of the profit
problem guarantees actual interdiction even if $\Delta$ is very small compared with the
weight of the initial tree. Note that in the case of the budget problem there is no qualitative
difference between specifying the target total weight and specifying the target increase in
weight.

The case of uniform cost was first considered in~\cite{FS96} who gave a poly-time
$O(\log B)$ approximation algorithm for the (total tree weight version of the) profit problem,
where $B$ is the budget (i.e., the number of edges the algorithm is allowed to remove).
They showed that the uniform cost problem is NP-hard (previously, it was known that the
problem with arbitrary costs is NP-hard; see~\cite{LC93} and the references in~\cite{FS96}).
The same problem was also discussed in~\cite{BTV11} and the references therein, where
algorithms running in time that is exponential in the budget $B$ were considered. Later,
constant factor approximation algorithms for the problem, without the cost-uniformity constraint,
were found~\cite{Zen15,LS17}. The latter paper gave a $4$-approximation guarantee.
The upper bound that was used in both papers cannot be used to get an approximation
better than $3$~\cite{LS17}. In~\cite{GS14} it was shown that the problem is fixed parameter-tractable
(parametrized by the budget $B$), but the budget problem is  $W[1]$-hard (parametrized
by the weight of the resulting tree).

We briefly review the constant factor approximation guarantees for total minimum spanning
tree weight in~\cite{Zen15,LS17}. Both papers use the following framework.
($i$) Let $w_1\le w_2\le \cdots\le w_k$ be the sorted list of distinct edge weights, and let
$G_1,G_2,\dots,G_k$ denote the subgraphs of the input graph $G$, where the edges of
$G_i$ are all the edges of $G$ of weight at most $w_i$. Then, the objective function at a
set $F$ of removed edges can be expressed as a function of the number of connected
components of $G_i\setminus F$ and $w_i - w_{i-1}$, for all $i$. This implies, in particular,
that the objective function is super-modular.
($ii$) Maximizing an unconstrained super-modular function is a polynomial time-computable
problem. Hence, the Lagrangian relaxation of the linear budget constraint can be computed
efficiently for any fixed setting of the Lagrange multiplier. Binary search can be used to find
a good multiplier. The usual impediment of this approach shows up here as well. The search
resolves the problem only if the solution spends exactly the upper bound on the cost. However,
it may end up producing two solutions for (essentially) the same Lagrange multiplier, one below
budget and one above budget. Those combine to form a bi-point {\em fractional} solution.
($iii$) How to extract a good integral solution from this bi-point solution is where the papers
diverge. The tighter approximation of~\cite{LS17} reduces, approximately, the problem of
extracting a good solution to the problem of tree knapsack, then uses a greedy method to
approximate the latter problem. The former result of~\cite{Zen15} used a more complicated
argument, but also a greedy approach. We give an example (in Section~\ref{sec:profit}) that
these algorithms do not perform well in terms of approximating the increase $\Delta$ in spanning
tree weight.

Super-modularity carries over to the objective function we use here, namely the increase
in the weight of a minimum spanning tree, as the difference between the functions is a
constant (the weight of the spanning tree before interdiction). Thus, as the above discussion
hints to, the profit problem is a special case of the problem of maximizing a monotonically
non-decreasing and non-negative super-modular function subject to a linear packing constraint
(a.k.a. a knapsack constraint). Similarly, the budget problem is a special case of minimizing a
non-negative linear function subject to a super-modular covering constraint (i.e., a lower bound
on a non-decreasing and non-negative super-modular function).

Similar settings are prevalent in combinatorial optimization. A generic problem of this
flavor is set cover, which is a special case of minimizing a non-negative linear function
subject to a monotonically non-decreasing and non-negative {\em sub-modular} covering
constraint. The related maximum coverage problem is a special case of maximizing a
monotonically non-decreasing and non-negative sub-modular function, subject to a
cardinality constraint (which is, of course, a special case of a knapsack constraint). More
broadly, many problems that arise in unsupervised machine learning are of this flavor.
For example, $k$-means and $k$-median clustering are special cases of minimizing
a monotonically non-increasing and non-negative sub-modular function subject to a
cardinality constraint. Several such formulations received general treatment; see for
example~\cite{Svi04,SF08,FNS11,VCZ11,IB13,SVW15,LS17b,AFLLMR21} and the
references therein. Obviously, an optimization problem has equivalent representations
derived by transformations between super-modularity and sub-modularity, maximization
and minimization, and/or covering and packing (by defining the function over the complement
set, or negating). These transformations reverse monotonicity, and moreover may not
preserve approximation bounds.

The particular combination of super-modular maximization subject to a knapsack constraint
is known as the super-modular knapsack problem, introduced in~\cite{GS89}. In general, it
is hard to approximate within any factor (given query access to a monotonically non-decreasing
objective function subject to a cardinality constraint); see the example in~\cite{Usu16}. The
case of a {\em symmetric} (therefore, non-monotone) super-modular function can be solved
exactly in polynomial time~\cite{GS10}. We are not aware of any relevant work on the problem
of approximating the minimum of a non-negative linear objective, subject to a super-modular
covering constraint. The convex hull of the indicator vectors that satisfy a generic super-modular
covering constraint is investigated in~\cite{AB15}.

Finally, we mention that spanning tree interdiction is one problem in a large repertoire
of interdiction problems, including in particular interdiction of shortest path, assignment
and matching problems, network flow problems, linear programs, etc. Some representative
papers
include~\cite{KBBEGRZ08,Zen08,BEHKKPSS10,BTV13,DG13,HMMPRS17}
(this list is far from being comprehensive).

\paragraph{Our techniques.}
Our results rely on
the notion of a partial cut, which is the set of edges that cross a cut with weight
below a given threshold weight. An optimal solution minimizing the cost of
increasing the weight of a minimum spanning tree by any amount is a single
partial cut. We show that such a cut can be computed efficiently by enumerating
over a polynomial time computable collection of candidate partial cuts. In order
to derive the approximation guarantees for the general budget problem, we apply
a batch greedy approach. We repeatedly compute the collection of candidate partial
cuts and choose a cut with the best gain per cost ratio. The proof of approximation
guarantee relies on an approximate characterization of an optimal solution by a
collection of partial cuts. We further show how to speed up the computation by
using in all iterations the collection computed for the input graph, rather than
recomputing a new collection in each iteration. A similar approach gives the logarithmic 
approximation for the profit problem, in a manner parallel to the greedy approximation 
for knapsack (i.e. use either the maximum greedy solution that does not exceed the 
budget, or the best single partial cut).

\paragraph{Organization.}
In Section~\ref{sec:prelim} we present some useful definitions and claims,
including a self-contained (and different) proof of super-modularity of the
gain in spanning tree weight function. In Section~\ref{sec:eps-inc} we give
a polynomial time algorithm for minimizing the cost of increasing the minimum
spanning tree weight by any amount. This algorithm motivates our approximation
algorithm for the budget problem. In Section~\ref{sec:relax} we present our
graph-theoretic relaxation. In Section~\ref{sec:budget} we present our main
result---an approximation algorithm for the budget problem. In Section~\ref{sec:profit}
we give an approximation algorithm for the profit problem, and discuss the
shortcoming of previous work to achieve this objective. Finally, in 
Section~\ref{sec:def} we remark on defense against spanning tree interdiction.

\section{Preliminaries}\label{sec:prelim}

In this section we present some general definitions and useful lemmas.

Let $G = (V, E)$ be a weighted undirected graph such that every edge $e$ has a non-negative weight
$w:E\rightarrow \mathbb{R}^+ \cup \{0\}$ and a positive removal cost $c:E\rightarrow \mathbb{R}^+$.
Let $\MST(G)$ denote the weight of a minimum spanning tree of $G$. We'll use the convention that
if $G$ is disconnected, then $\MST(G) = \infty$. Also, for a set of edges $F\subset E$, we denote
$c(F) = \sum_{f\in F} c(f)$.
Given a budget $B$, the {\em spanning tree Interdiction problem} is to find a set of edges $F\subset E$
satisfying $c(F) \leq B$ and maximizing  $\MST(G\setminus F)$, where $G\setminus F$ denotes the
graph $(V, E \setminus F)$.
The {\em profit} $p_G(F)$ of a solution $F$ to the spanning tree interdiction problem is defined to be
the increase $p_G(F) = \MST(G\setminus F) - \MST(G)$ in the weight of the minimum spanning tree.
The {\em profit to cost ratio} of a set of edges $F\subset E$ is defined to be $r_{G}(F) = \frac{p_{G}(F)}{c(F)}$.
For the empty set ($c(F)=0$), we define $r_{G}(\emptyset) = 0$.

Consider a weighted graph $G = (V,E)$ as above. Given a set of nodes $S$,
$\emptyset\ne S\subsetneq V$,
The {\em complete cut} $C = C_G(S)$ defined by $S$ is the set of edges
\[C = C_G(S) = \{ e \in E:\ |e\cap S|=1 \}.\]
We say that the edges in $C$ {\em cross} the cut $C = C_G(S)$.
Given $S$ and $W\in\mathbb{R}^+$, the {\em partial cut} $C = C_G(S,W)$ is the set of edges
\[C = C_G(S,W) = \{ e \in C_G(S):\ w(e) < W \}.\]

Consider a connected graph $G$, let $T$ be a spanning tree of $G$, and let $e \in T$. We denote by
$C_{T,e}$ the cut in $G$ that satisfies $C_{T,e} \cap T = e$.
\\
We show the following lemmas that will be useful later (the proofs are in Appendix~\ref{appendix:A}).

\begin{lemma}~\label{MST_DELETED_EDGES_CLAIM}
Consider a minimum spanning tree $T$ of a connected graph $G$. Let $F\subset E$
such that $G' = G \setminus F$ is connected. Then, there exists a minimum spanning
tree $T'$ for $G'$ that includes all the edges in $T \setminus F$.
\end{lemma}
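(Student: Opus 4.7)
My plan is to start with an arbitrary minimum spanning tree $T'$ of $G'$ and iteratively perform exchanges that bring $T'$ into compliance, adding one edge of $T\setminus F$ at a time while preserving the MST property. Fix some $e\in T\setminus F$ with $e\notin T'$. Adding $e$ to $T'$ creates a unique cycle $C\subseteq T'\cup\{e\}$ in $G'$, and I will identify an edge $e'\in C\setminus\{e\}$ such that swapping $e'$ for $e$ again yields an MST of $G'$ and moreover $e'\notin T$, so that the $T\setminus F$--edges already placed in the tree are not disturbed.

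To choose $e'$, I would consider the cut $C_{T,e}$ in $G$. A cycle meets any cut in an even number of edges, and $e$ itself is one such edge, so there must be another $e'\in C\cap C_{T,e}$. The only edge of $T$ crossing $C_{T,e}$ is $e$, hence $e'\notin T$; and since $e'\in C\subseteq T'\cup\{e\}$ with $e'\neq e$, we get $e'\in T'\setminus T$. The weight $w(e')$ will then be squeezed from both sides: since $T$ is an MST of $G$ and $e'$ crosses $C_{T,e}$, exchangeability for $T$ gives $w(e')\geq w(e)$; since $T'$ is an MST of $G'$ and $C$ is the fundamental cycle of the chord $e$ with respect to $T'$, the cycle property for $T'$ gives $w(e')\leq w(e)$. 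Hence $w(e')=w(e)$ and $T''=(T'\setminus\{e'\})\cup\{e\}$ is again an MST of $G'$.

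Because $e'\notin T$ while $e\in T\setminus F$, the quantity $|T'\cap(T\setminus F)|$ strictly increases with each exchange, so the process terminates in at most $|V|-1$ iterations with an MST of $G'$ containing all of $T\setminus F$. The main obstacle is ensuring that the removed edge $e'$ lies outside $T\setminus F$: a naive exchange argument risks removing an edge we have already placed, which would stall the induction. The cycle--cut parity argument applied to $C_{T,e}$---where $e$ is the unique tree edge crossing the cut---is precisely what guarantees $e'\notin T$, and it is simultaneously what forces the tight weight equality $w(e')=w(e)$ needed to preserve minimality of $T'$.
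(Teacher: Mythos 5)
Your proof is correct, but it takes a genuinely different route from the paper. The paper argues forward: it colors the edges of $T\setminus F$ blue one at a time, observing that each $e\in T\setminus F$ is a minimum-weight edge in $C_{T,e}\setminus F$ and that this cut contains no other tree edge (hence no previously blued edge), so the blue rule guarantees an MST of $G'$ containing all of them. You instead argue backward from an arbitrary MST $T'$ of $G'$, performing a sequence of exchanges. Your key step---intersecting the fundamental cycle of $e$ in $T'$ with the fundamental cut $C_{T,e}$ of $e$ in $T$, and using cycle--cut parity to extract a non-tree edge $e'$---is a nice way to simultaneously guarantee (a) that $e'$ is outside $T$ (so the induction does not stall), (b) $w(e')\ge w(e)$ from the cut property of $T$, and (c) $w(e')\le w(e)$ from the cycle property of $T'$, forcing equality and preserving minimality. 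Both proofs turn on the same underlying observation (each $e\in T\setminus F$ is a minimum-weight edge in a cut of $G'$ containing no other edge of $T$), but the paper's is a one-shot application of the blue rule while yours is an explicit exchange argument with a well-chosen potential $|T'\cap(T\setminus F)|$; yours is more self-contained and spells out the invariant preservation, while the paper's is shorter by delegating the bookkeeping to the blue rule.
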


\begin{lemma}\label{PROFIT_OF_CUT}
Let $G = (V,E)$ be a $w$-weighted graph and let $C = C_G(S,W) \subseteq E$ be a partial cut in $G$.
Let $e = (u,v) \in E$ be an edge that crosses $C_G(S)$. Then,
\[p_G(C) \geq W - w(e).\]
\end{lemma}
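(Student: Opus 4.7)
The plan is to establish the bound by a single cycle-exchange step applied to a minimum spanning tree of $G\setminus C$. First I dispose of the degenerate case: if $G\setminus C$ is disconnected then $\MST(G\setminus C)=\infty$ and the inequality is trivial, so I may assume $G\setminus C$ is connected and let $T'$ be a minimum spanning tree of it. If $e\in T'$ then in particular $e\notin C$, which forces $w(e)\geq W$; hence $W-w(e)\leq 0\leq p_G(C)$ and there is nothing to prove.

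The interesting case is $e\notin T'$. Here $T'\cup\{e\}$ contains a unique cycle $\gamma$, and since $e\in\gamma$ crosses $C_G(S)$, the standard parity observation that any cycle crosses any cut an even number of times guarantees an additional edge $e^*\in\gamma\setminus\{e\}$ which also crosses $C_G(S)$. This $e^*$ belongs to $T'\subseteq E\setminus C$, so being a crossing edge outside the partial cut $C$ it must satisfy $w(e^*)\geq W$. Now $T'-e^*+e$ is a spanning tree of $G$, so
\[
\MST(G) \;\leq\; w(T') - w(e^*) + w(e) \;\leq\; \MST(G\setminus C) - W + w(e),
\]
which rearranges to $p_G(C)\geq W-w(e)$.

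The main obstacle in this plan is locating the heavy crossing edge $e^*$, which rests on the parity argument for cycles and cuts together with the defining property of the partial cut. Once $e^*$ is in hand, the remainder is a routine MST swap and no further estimates are needed.
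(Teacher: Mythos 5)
Your proof is correct and takes a cleaner route than the paper's. Both arguments share the core step: take a minimum spanning tree $T'$ of $G\setminus C$, add $e$, and locate a second edge $e^*$ of the resulting cycle that crosses $C_G(S)$, which must have $w(e^*)\geq W$ because it survives in $E\setminus C$. The paper, however, first reduces to the case where $e$ has minimum weight in $C_G(S)$, invokes the blue rule to obtain an MST $T$ of $G$ containing $e$, and then argues by contradiction that $w(T')\geq w(T)+W-w(e)$. You sidestep all of that: since $T'-e^*+e$ is itself a spanning tree of $G$, you bound $\MST(G)\leq w(T')-w(e^*)+w(e)$ directly, with no need for a minimum spanning tree of $G$ containing $e$, no blue rule, and no worst-case reduction. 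Your case split also makes explicit the degenerate branches ($G\setminus C$ disconnected, or $e\in T'$ forcing $w(e)\geq W$) that the paper handles implicitly or via the $W\leq w(e)$ triviality. The gain is a shorter, self-contained argument; the paper's version buys nothing extra here, and the reduction step it uses is simply unnecessary once one realizes the swap on $T'$ produces a witness spanning tree of $G$.
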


The following lemma reproves a claim from~\cite{Zen15}.
\begin{lemma}[super-modularity of the profit function]\label{SUPERMODULARITY}
Let graph $G = (V,E)$ be a $w$-weighted graph, let $B \subset E$ be set of edges, and let
$e \in E\setminus B$ be an edge. If $G\setminus B$ is connected, then
\[p_{G \setminus B}(e) \ge p_G(e).\]
\end{lemma}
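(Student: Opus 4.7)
The plan is to reduce the lemma to a simple monotonicity statement about bottleneck distances under taking subgraphs. For a weighted graph $H$ and vertices $u, v$, let $\omega_H(u,v)$ denote the bottleneck distance from $u$ to $v$: the minimum over all $u$-$v$ paths $P$ in $H$ of $\max_{f \in P} w(f)$, with the convention $\omega_H(u,v) = \infty$ when $u$ and $v$ are in different connected components. The crux of the argument is the standard MST identity
\[
p_H(e) \;=\; \max\bigl\{0,\ \omega_{H \setminus e}(u,v) - w(e)\bigr\}
\qquad \text{for every edge } e = (u,v) \in H.
\]

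First I would verify this identity. When $e$ is a bridge of $H$, both sides equal $\infty$, so assume otherwise and let $T$ be an MST of $H$. If $e \notin T$ then $T$ remains a spanning tree of $H \setminus e$, so $p_H(e) = 0$; the cycle property of MSTs gives $w(e) \ge \omega_{H \setminus e}(u,v)$, so the right-hand side is $0$ as well. If $e \in T$, then $T \setminus e$ partitions $V$ into two components, and a standard cut-property argument identifies the lightest edge of $H \setminus e$ across this cut as having weight exactly $\omega_{H \setminus e}(u,v)$; substituting it for $e$ in $T$ produces an MST of $H \setminus e$, giving $p_H(e) = \omega_{H \setminus e}(u,v) - w(e)$.

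With the identity in hand, the lemma follows in one line. Since $(G \setminus B) \setminus e$ is a subgraph of $G \setminus e$, every $u$-$v$ path in the former is also a $u$-$v$ path in the latter, and minimizing over a smaller family of paths can only raise the bottleneck value: $\omega_{(G\setminus B)\setminus e}(u,v) \ge \omega_{G\setminus e}(u,v)$. Applying the identity to $G$ and to $G \setminus B$, and using that $x \mapsto \max\{0, x - w(e)\}$ is monotone non-decreasing, yields $p_{G\setminus B}(e) \ge p_G(e)$. The only subtle case is when $(G \setminus B) \setminus e$ is disconnected, but then $p_{G\setminus B}(e) = \infty$ and the inequality holds trivially; the main obstacle is really just establishing the bottleneck identity cleanly, which reduces to two applications of the cut/cycle optimality properties already used throughout Section~\ref{sec:prelim}.
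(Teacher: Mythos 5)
Your proof is correct, and it is dual rather than identical to the paper's argument. The paper also characterizes $p_G(e)$ by a bottleneck quantity, but phrased on the cut side: it sets $W$ to be the maximum over $u$-$v$ cuts $S$ in $G\setminus\{e\}$ of $\min\{w(e'):\ e'\in C_{G\setminus\{e\}}(S)\}$, shows $p_G(e)=W-w(e)$ when $W\ge w(e)$ (using Lemma~\ref{PROFIT_OF_CUT} for one inequality and a tree-exchange argument for the other), and then gets monotonicity because deleting $B$ shrinks each cut's edge set, so each cut's minimum can only rise. By the classical minimax duality for bottleneck paths, the paper's $W$ equals your $\omega_{G\setminus e}(u,v)$, so you are computing the same number from the path side. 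Your monotonicity step---fewer paths, so the infimum over paths can only rise---is the dual of the paper's and is arguably the more transparent of the two; what the cut-side view buys the paper is that the bound $p_G(e)\ge W-w(e)$ comes for free from Lemma~\ref{PROFIT_OF_CUT}.

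One spot in your sketch deserves more care: in the $e\in T$ subcase you assert that the lightest replacement edge $e'$ across $C_{T,e}$ satisfies $w(e')=\omega_{H\setminus e}(u,v)$. The inequality $\omega_{H\setminus e}(u,v)\ge w(e')$ is immediate since every $u$-$v$ path in $H\setminus e$ crosses the cut. For the reverse inequality you need the standard fact that the bottleneck distance is attained along the $u$-$v$ path in an MST of $H\setminus e$, which by Lemma~\ref{MST_DELETED_EDGES_CLAIM} you may take to be $T'=T-e+e'$; writing $e'=(x,y)$, the point is that every edge $f$ on the $u$-$x$ and $y$-$v$ segments of that path satisfies $w(f)\le w(e')$, since otherwise $e'$ would cross $C_{T,f}$ and be strictly lighter than $f$, contradicting $T$ being an MST. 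With that supporting step spelled out, the identity $p_H(e)=\max\{0,\ \omega_{H\setminus e}(u,v)-w(e)\}$ holds and the lemma follows in one line as you describe.
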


\begin{corollary}~\label{PROFIT_PRESERVATION}
Let $G=(V,E)$ be a weighted graph, and let $A,B \subset E$ be disjoint sets of edges
($A \cap B = \emptyset$). Then
$p_{G \setminus B}(A) \ge p_G(A)$.
\end{corollary}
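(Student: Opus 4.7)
The plan is to prove the corollary by induction on $|A|$, reducing the set version to the single-edge version that Lemma~\ref{SUPERMODULARITY} already supplies. The base case $|A|=1$ is exactly that lemma (in the case where $G\setminus B$ is disconnected the lemma is trivial, and otherwise it gives the bound directly). For the inductive step, I would write $A=A'\cup\{e\}$ with $|A'|=|A|-1$ and exploit the telescoping identity
\[
p_H(A) \;=\; \bigl[\MST(H\setminus A)-\MST(H\setminus A')\bigr] + \bigl[\MST(H\setminus A')-\MST(H)\bigr] \;=\; p_{H\setminus A'}(e) + p_H(A'),
\]
applied once with $H=G$ and once with $H=G\setminus B$. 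This reduces the inequality $p_{G\setminus B}(A)\ge p_G(A)$ to showing both summands are monotone in the right direction.

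For the $A'$ summand, the inductive hypothesis gives $p_{G\setminus B}(A')\ge p_G(A')$. For the $e$ summand, I would apply Lemma~\ref{SUPERMODULARITY} to the graph $G'=G\setminus A'$ with the removed set $B$ and the single edge $e$. Since $A$ and $B$ are disjoint, $B\subseteq E(G')$ and $e\in E(G')\setminus B$, so the lemma yields $p_{G'\setminus B}(e)\ge p_{G'}(e)$, which is exactly $p_{(G\setminus B)\setminus A'}(e)\ge p_{G\setminus A'}(e)$ after noting $(G\setminus A')\setminus B=(G\setminus B)\setminus A'$. Adding the two inequalities and using the telescoping identity completes the induction.

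The only subtlety is the convention $\MST(\cdot)=\infty$ on disconnected graphs, which could a priori produce undefined expressions like $\infty-\infty$ in the telescoping. This is handled by a short case analysis: if $G\setminus(A\cup B)$ is disconnected then $p_{G\setminus B}(A)=\infty$ and the inequality is trivial; otherwise both $G\setminus A'$ and $(G\setminus B)\setminus A'$ are connected (since removing fewer edges preserves connectivity), so every $\MST$ value appearing in the telescoping identity above is finite and all manipulations are legitimate. The main obstacle is therefore just setting up this bookkeeping carefully; the real content is entirely in Lemma~\ref{SUPERMODULARITY}, and the corollary follows by the additive decomposition of $p$ along an edge-by-edge removal order.
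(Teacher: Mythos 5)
Your proof is correct and takes essentially the same approach as the paper: both remove the edges of $A$ one at a time and apply Lemma~\ref{SUPERMODULARITY} at each step, telescoping the profit. Your write-up is phrased as an induction rather than a direct summation, and you spell out the disconnected-graph bookkeeping more explicitly than the paper does, but the underlying argument is identical.
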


\section{An Algorithm for \boldmath$\varepsilon$-Increase}\label{sec:eps-inc}

In this section we design and analyze a polynomial time algorithm for computing
the minimum cost interdiction to increase the weight of a minimum spanning tree
(by any amount). I.e., given a graph $G=(V,E)$ with edge weights $w$ and edge
costs $c$, we want to find a set of edges $F\subset E$ for which
$\MST(G\setminus F) > \MST(G)$, minimizing $c(F)$. This algorithm motivates
our approximation algorithm for the budget problem given in Section~\ref{sec:budget}
(and its derivative for the profit problem in Section~\ref{sec:profit}).

The algorithm is defined as follows. Compute a minimum spanning tree $T$ of
$G$. Enumerate over all the edges $e\in T$. Given an edge $e$, contract all the
edges of weight $< w(e)$.
Remove all edges of length $> w(e)$. Find a minimum (with respect to edge
cost) $u$-$v$ cut in the resulting graph, where $e = \{u,v\}$. The output of
the algorithm $F_{\min}$ is the minimum cost cut generated, among all choices
of $e \in T$.\\
\\
The following two claims imply that the output of the algorithm is valid and optimal (the proofs are in Appendix~\ref{appendix:A}).

\begin{claim}\label{cl: alg cost}
$c(F_{\min})\le c(F^*)$, where $F^*$ is an optimal solution.
\end{claim}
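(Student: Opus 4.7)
The plan is to exhibit, for a carefully chosen tree edge $e^* \in T$, an explicit $u$-$v$ cut in the contracted graph associated to $e^*$ whose cost is at most $c(F^*)$. Since the algorithm finds a minimum-cost $u$-$v$ cut at $e^*$, and $F_{\min}$ optimizes over all $e \in T$, this immediately gives $c(F_{\min}) \le c(F_{e^*}) \le c(F^*)$.

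The first step is to extract a witnessing partial cut from $F^*$. Since $\MST(G \setminus F^*) > \MST(G)$, I let $W$ be the smallest edge-weight threshold for which removing $F^*$ strictly increases the number of connected components of the subgraph of $G$ consisting of edges of weight strictly below $W$. Some connected component of that subgraph then splits into multiple pieces after removing $F^*$; letting $S$ be one such piece, every edge of $C_G(S,W)$ must lie in $F^*$, and $C_G(S,W) \ne \emptyset$ because $S$ and its complement inside that component were connected in the weight-$<W$ subgraph.

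Given $(S,W)$, the second step is to pick $e^*$. Let $w_0$ denote the minimum weight of any edge of $G$ crossing the cut induced by $S$; then $w_0 < W$. I claim some tree edge crosses the cut at weight exactly $w_0$: if a minimum-weight cut edge $f_0$ does not lie in $T$, then the cycle $f_0$ creates in $T$ must re-cross the cut through some tree edge $f'$, and the cycle property of the MST gives $w(f') \le w(f_0) = w_0$, which combined with $f' \in C_G(S)$ forces $w(f') = w_0$. I pick $e^* = \{u,v\} \in T$ to be any such minimum-weight tree edge across the cut.

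Finally, I verify that the algorithm's iteration at $e^*$ exposes the cut I want. Because no edge of $C_G(S)$ has weight below $w_0$, no component of the weight-$<w_0$ subgraph of $G$---equivalently, no supernode produced by the algorithm's contraction step---can contain vertices from both $S$ and $V \setminus S$. Hence the partition $(S, V\setminus S)$ lifts to a well-defined $u$-$v$ partition $(A,B)$ of the contracted vertex set. The edges of the contracted graph crossing $(A,B)$ are exactly the weight-$w_0$ edges of $G$ crossing the cut of $S$, and each such edge lies in $C_G(S,W) \subseteq F^*$. So this candidate $u$-$v$ cut has cost at most $c(F^*)$, and the algorithm's minimum-cost $u$-$v$ cut at $e^*$ is no more expensive. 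The main obstacle I anticipate is the first step: carefully verifying that the component-splitting analysis produces a non-empty partial cut contained in $F^*$. The remaining steps are direct applications of the cycle property of the MST together with the observation that at $e^*$ the contraction cleanly respects the partition $(S, V\setminus S)$.
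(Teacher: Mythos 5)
Your proof is correct, and it takes a genuinely different route from the paper's. The paper runs a classical MST-exchange argument: it compares the MST $T$ with an MST $T'$ of $G\setminus F^*$, performs a sequence of minimum-weight edge swaps along the fundamental cuts $C_{T_{i-1},e_i}$, locates the first swap that strictly increases the tree weight, shows the level set $\{e'\in C_{T_{i-1},e_i}:w(e')=w(e_i)\}$ lies inside $F^*$, and then argues this set is itself a feasible optimal cut considered by the algorithm; it also handles separately the case where $F^*$ disconnects $G$. You instead reason through the threshold subgraphs $G_{<W}$: since $\MST(G)$ is determined by the component counts of $G_{<W}$ over all thresholds (e.g., by running Kruskal), an increase in MST weight forces some $W$ at which removing $F^*$ splits a component; a split piece $S$ gives a nonempty partial cut $C_G(S,W)\subseteq F^*$, and the cycle property locates a tree edge $e^*$ at the critical weight $w_0=\min\{w(e):e\in C_G(S)\}$, after which the cut lifts cleanly to the contracted graph the algorithm builds at $e^*$. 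Your route avoids the iterated exchange bookkeeping and avoids the separate disconnecting case (just take $W$ above the maximum edge weight when $G\setminus F^*$ is disconnected), and it aligns nicely with the threshold-component viewpoint the paper uses elsewhere (Section~\ref{sec:relax}). The one thing to make explicit is the "main obstacle" you flag: a $W$ with the stated splitting property does exist, which follows from the Kruskal/component-count expression for $\MST$ (and you should allow $W$ to exceed the largest edge weight to cover the disconnecting case); also note that the minimality of $W$ is never actually used, so any such $W$ works.
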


\begin{claim}\label{cl: alg len}
$\MST(G \setminus F_{\min}) > \MST(G)$.
\end{claim}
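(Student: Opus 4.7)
The plan is to reduce this to a single invocation of Lemma~\ref{PROFIT_OF_CUT}. Since $F_{\min}$ is chosen as the cheapest candidate $F_e$ over the iterations $e \in T$, it suffices to show that for \emph{every} $e \in T$, the candidate cut $F_e$ produced in that iteration already satisfies $p_G(F_e) > 0$. That gives $p_G(F_{\min}) > 0$, which is exactly $\MST(G \setminus F_{\min}) > \MST(G)$.

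The main step is to identify $F_e$ with a partial cut of $G$ to which Lemma~\ref{PROFIT_OF_CUT} applies. Fix $e = \{u,v\} \in T$. After contracting all edges of weight $< w(e)$ and deleting all edges of weight $> w(e)$, a $u$-$v$ cut in the contracted graph corresponds to a bipartition $(S^*, \overline{S^*})$ of the supervertices, with $u$'s supervertex in $S^*$ and $v$'s supervertex in $\overline{S^*}$. Letting $S \subseteq V$ be the union of original vertices lying in the supervertices of $S^*$, we have $u \in S$, $v \notin S$, and no edge of weight $< w(e)$ crosses $(S,\overline S)$ in $G$ (any such edge was contracted, forcing its endpoints into a common supervertex). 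Therefore $F_e$ is exactly the set of edges of weight $w(e)$ crossing $(S,\overline S)$, i.e.\ $F_e = C_G(S, W)$ for any $W$ with $w(e) < W \leq w'$, where $w'$ is the next distinct edge weight above $w(e)$ (or $W = \infty$ if none). Since $e = \{u,v\}$ itself crosses $C_G(S)$, Lemma~\ref{PROFIT_OF_CUT} yields
\[
p_G(F_e) \;\geq\; W - w(e) \;>\; 0,
\]
which is what we need.

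The only subtlety I anticipate is confirming that the iteration is well-posed, i.e.\ that $u$ and $v$ actually sit in distinct supervertices so that a $u$-$v$ cut exists at all. This follows from $e \in T$: if a path of edges of weight $< w(e)$ connected $u$ to $v$, it would cross the fundamental cut $C_{T,e}$ at an edge lighter than $e$, and swapping would give a spanning tree strictly lighter than $T$, contradicting minimality. Once this is noted, the partial-cut interpretation above and Lemma~\ref{PROFIT_OF_CUT} together dispose of the claim with no further calculation.
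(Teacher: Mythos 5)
Your proof is correct, and it takes a genuinely different and shorter route than the paper's. The paper argues directly via the blue rule: it fixes the winning $(e,C)$ pair, constructs a minimum spanning tree $T$ of $G$ containing $e$, shows that the forest $S = T$ minus the weight-$w(e)$ edges of $C$ extends to a minimum spanning tree $T'$ of $G\setminus C$, and then derives a contradiction from the assumption $w(T')=w(T)$ by chasing a cycle. You instead observe that every candidate $F_e$, not just the winning one, is literally a partial cut $C_G(S,W)$ with $W > w(e)$ and with $e$ crossing $C_G(S)$, so Lemma~\ref{PROFIT_OF_CUT} immediately gives $p_G(F_e) > 0$. This is cleaner because it reuses machinery the paper already proved (the blue-rule cycle argument is essentially repeated inside the proof of Lemma~\ref{PROFIT_OF_CUT}, so you are factoring it out rather than reproving it), and it yields a stronger statement: every iteration of the enumeration already produces a strictly profitable cut. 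Your remark about well-posedness (that $u$ and $v$ land in distinct supervertices because $e\in T$ is the minimum-weight edge of its fundamental cut) is the right thing to check and is handled correctly. One cosmetic note: the paper restricts $W\in\mathbb{R}^+$ in the definition of a partial cut, so rather than ``$W=\infty$ if none'' you should pick any finite $W>w(e)$ in the case where $w(e)$ is the largest weight; the argument is otherwise unaffected.
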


Let $\tau(n,m)$ denote the time to compute a minimum $s$-$t$ cut in a graph
with $n$ nodes and $m$ edges.
\begin{corollary}
The algorithm finds an optimal solution in time $O(n \cdot \tau(|V|,|E|))$.
\end{corollary}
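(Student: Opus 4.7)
The plan is to combine the two claims just proved with a direct running-time accounting of the algorithm's main loop.

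First I would argue correctness. Claim~\ref{cl: alg len} shows that $F_{\min}$ is a feasible solution, namely $\MST(G\setminus F_{\min})>\MST(G)$. Claim~\ref{cl: alg cost} shows that no feasible solution has cost smaller than $c(F_{\min})$. Together these give optimality of $F_{\min}$ for the $\varepsilon$-increase problem.

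Next I would account for the running time. Computing the initial minimum spanning tree $T$ takes $O(|E|\log|V|)$ time, which is dominated by the cost of a single minimum $s$-$t$ cut computation and is therefore absorbed into $\tau(|V|,|E|)$. The algorithm then enumerates over the edges $e\in T$; since $|T|=|V|-1$, there are $O(n)$ such iterations. For each edge $e=\{u,v\}$, the algorithm builds a derived graph by contracting the edges of weight strictly less than $w(e)$ and deleting the edges of weight strictly greater than $w(e)$; this can be done in $O(|V|+|E|)$ time using a single pass over the edges with a union-find (or even a simple linear-time procedure, since the contracted classes are exactly the connected components of the subgraph of edges of weight $<w(e)$). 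On the resulting graph, the algorithm computes a minimum-cost $u$-$v$ cut in time $\tau(|V|,|E|)$. Since $\tau(|V|,|E|)=\Omega(|V|+|E|)$, the preprocessing per iteration is absorbed.

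Summing over the $O(n)$ iterations yields total running time $O(n\cdot\tau(|V|,|E|))$, and finally the algorithm selects a minimum-cost cut among the $O(n)$ candidates in additional $O(n)$ time. Combining optimality with this time bound gives the corollary.

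The only subtle point, which I would mention explicitly, is that the contraction/removal step does not increase the number of vertices or edges, so the min-cut subroutine is indeed called on inputs of size bounded by $|V|$ and $|E|$; hence $\tau(|V|,|E|)$ is a valid upper bound for every iteration.
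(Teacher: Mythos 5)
Your proposal is correct and follows the same approach as the paper: correctness comes from combining Claims~\ref{cl: alg cost} and~\ref{cl: alg len}, and the running time from the $O(n)$ iterations each dominated by a minimum $s$-$t$ cut computation. You just fill in a few details (MST computation and per-iteration preprocessing being absorbed into $\tau(|V|,|E|)$) that the paper leaves implicit.
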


\begin{proof}
Recall that $F_{\min}$ is the solution that the algorithm computes. By Claim~\ref{cl: alg cost},
$c(F_{\min}) \le c(F^*)$, and by Claim~\ref{cl: alg len} we have $\MST(G \setminus F_{\min}) > \MST(G)$,
so $F_{\min}$ is an optimal solution. The algorithm iterates over the $|V|-1$ edges of $T$, and
for each edge calculates a minimum cut, hence the time complexity.
\end{proof}

\section{Relaxed Specification of the Optimum}\label{sec:relax}

In this section, we define a relaxation to the optimal solution for the budget minimization problem,
based on a carefully constructed collection of partial cuts. We will then use this relaxation to
analyze our approximation algorithms.

Given a solution $F$ with cost $B=c(F)$ and profit $\Delta$, we construct a sequence of cuts $C_1, C_2, \ldots, C_{t-1}$
that satisfy the following properties regarding their cost and profit.

\begin{theorem}~\label{CUTS_THEOREM}
Let $G = (V,E)$ be an undirected graph, and let $n = |V|$. Also, let
$F \subseteq E$ be a set of edges of cost $B = c(F)$ and profit $\Delta$.
Then, there exists a sequence of partial cuts $C_1, C_2, \ldots, C_{t-1}$ such that
\[\sum_{i=1}^{t-1} c(C_i) \leq 2B\cdot \log n,\]
and
\[\sum_{i=1}^{t-1} p_G(C_i) \geq \Delta.\]
\end{theorem}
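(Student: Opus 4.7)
The plan is to construct the partial cuts from the MST-exchange structure that $F$ induces. Let $T := \MST(G)$ and $T' := \MST(G \setminus F)$, and use the standard MST exchange argument (cf.\ Lemma~\ref{MST_DELETED_EDGES_CLAIM}) to pair each broken MST edge $f_i \in T \cap F$ with a replacement $f'_i \in T' \setminus T$ that crosses the fundamental cut $C_{T, f_i}$ and satisfies $w(f'_i) \geq w(f_i)$. Setting $\delta_i := w(f'_i) - w(f_i)$, we get $\sum_i \delta_i = w(T') - w(T) = \Delta$. For each swap I would define the candidate partial cut $C_i := C_G(S_i, w(f'_i))$, where $S_i$ is one of the two sides of $T \setminus f_i$. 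By the minimality of $f'_i$ among edges of $C_{T, f_i}$ outside $F$, every edge of $C_G(S_i) = C_{T, f_i}$ of weight strictly below $w(f'_i)$ must lie in $F$, giving $C_i \subseteq F$. Applying Lemma~\ref{PROFIT_OF_CUT} with $e := f_i$ yields $p_G(C_i) \geq w(f'_i) - w(f_i) = \delta_i$, so summing over swaps gives the profit bound $\sum_i p_G(C_i) \geq \Delta$.

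For the cost bound $\sum_i c(C_i) \leq 2 B \log n$, since $C_i \subseteq F$ we have $\sum_i c(C_i) = \sum_{e \in F} c(e) \cdot k_e$, where $k_e := |\{i : e \in C_i\}|$. The task reduces to choosing the sides $S_i$ so that $k_e \leq 2 \log n$ for every $e \in F$. The natural candidate is to let $S_i$ be the smaller (by vertex count) of the two sides of $T \setminus f_i$; a short case analysis shows that the resulting family $\{S_i\}_i$ is laminar with each $|S_i| \leq n/2$. To obtain the $2\log n$ multiplicity, I plan to combine this with a balanced hierarchical decomposition of $T$ (a centroid decomposition of depth $O(\log n)$), charging each swap to its unique level in the decomposition. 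For a fixed auxiliary edge $e = (u, v) \in F$, the cuts containing $e$ correspond to swaps $f_i$ on the $T$-path between $u$ and $v$ with $w(e) < w(f'_i)$; matching each such swap to a level of the centroid decomposition and accounting separately for the two endpoints of $e$ (each contributing at most one level of the chain) yields $k_e \leq 2 \log n$.

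The main obstacle I anticipate is reconciling two conflicting requirements on the sides. On the profit side, $S_i$ must be a genuine side of $T \setminus f_i$ in the full tree so that $C_G(S_i) = C_{T, f_i}$ and the containment $C_i \subseteq F$ remains valid; on the cost side, $\{S_i\}_i$ must be balanced enough that each vertex lies in only $O(\log n)$ of the sides. The smaller-side choice gives laminarity but by itself does not bound the per-vertex chain length (which can be as long as $n/2$ for pathological MSTs), so the $\log n$ factor must come from an additional structural argument---most cleanly, a dyadic grouping of the $S_i$'s by size together with the centroid charging described above. Making the combinatorial bookkeeping tight enough to produce exactly the constant $2$ in $2 B \log n$, rather than an unspecified $O(\log n)$, is where I expect the proof to require the most care.
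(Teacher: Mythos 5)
Your profit bound per cut is fine, but the cost bound has a genuine gap at its very first step: the claim that $C_i \subseteq F$.

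You define $C_i := C_G(S_i, w(f'_i))$ where $S_i$ is a side of the fundamental cut $C_{T,f_i}$, and you justify $C_i \subseteq F$ by ``the minimality of $f'_i$ among edges of $C_{T,f_i}$ outside $F$.'' But the bijection between $T \cap F$ and $T' \setminus T$ that comes from the matroid exchange (Hall's) argument does not, and in general cannot, pair each $f_i$ with the minimum-weight edge of $C_{T,f_i} \setminus F$: several removed tree edges can compete for the same lightest replacement, and the matching must break the tie by sending some $f_i$ to a heavier $f'_i$. When that happens, $C_{T,f_i}$ contains non-$F$ edges strictly lighter than $w(f'_i)$, so $C_i \not\subseteq F$ and $c(C_i)$ has no relation to $B$. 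Concretely, let $T$ be the path $v_1$--$v_2$--$v_3$--$v_4$ with tree edges $f_1,f_2,f_3$ of weight $0$, let $F = \{f_1,f_2,f_3\}$, and let the only non-tree edges be $g_1=(v_1,v_3)$, $g_2=(v_2,v_4)$, $g_3=(v_1,v_4)$ with weights $1,2,3$, so $T'=\{g_1,g_2,g_3\}$. One checks that every perfect matching between $\{f_1,f_2,f_3\}$ and $\{g_1,g_2,g_3\}$ respecting the crossing condition produces at least one index $i$ for which the corresponding fundamental cut $C_{T,f_i}$ contains some $g_j \notin F$ with $w(g_j) < w(f'_i)$; giving $g_1,g_2,g_3$ large removal costs makes $\sum_i c(C_i)$ arbitrarily larger than $B$.

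The paper sidesteps exactly this obstacle by \emph{not} using fundamental cuts of $T$. It contracts the components of $T \setminus F$ to get $G_{cc}'$, takes the MST $T_{cc}'$ of $G_{cc}'$, sorts its edges $e_1',\dots,e_{t-1}'$ by weight, and for each $i$ takes a side $X_i$ of the forest $T_{cc}' - \{e_j' : j \geq i\}$, defining $C_i := C_G(X_i, w(e_i'))$. With this ``level-set'' construction, Claim~\ref{cl: cut in G} shows $C_i \subseteq F$ directly from the MST property of $T_{cc}'$: any lighter crossing edge outside $F$ would improve $T_{cc}'$. The Hall-type matching appears only afterward, in Claim~\ref{K_EDGES_EXIST}/Claim~\ref{cl: profit claim}, to pair the already-constructed cuts with edges of $T \cap F$ for the profit accounting --- a role with no minimality requirement. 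Finally, for the $2\log n$ multiplicity, your centroid decomposition plan is a different (and vaguer) route than the paper's explicit ``choose the less-loaded side'' rule with its inductive $\log t$ bound on the per-vertex counter (Claim~\ref{cl: num cuts}); that could plausibly be made to work, but it is moot until the $C_i \subseteq F$ issue is repaired, which requires replacing the fundamental-cut construction with something like the paper's $T_{cc}'$-based one.
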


\paragraph{Constructing the sequence of the cuts.}
Let $G = (V,E)$ be a graph, and let $T$ be a minimum spanning tree in $G$. Let $F \subset E$
be a set of edges with cost $B$ and profit $\Delta$. Denote $G \setminus F$ by $G'$. The solution
$F$ removes some edges of $G$, including $t-1$ edges $e_1, e_2, \ldots, e_{t-1}$ of $T$. Notice
that $t \leq n-1$, and if all removal costs are $1$, also $t \leq B$. Removing those edges splits $T$
into $t$ connected components $A_1, A_2, \ldots, A_{t}$. We emphasize that $A_i$ denotes the
node set of the $i$-th connected component. Therefore, this is a partition of the nodes of $G$.
Let $T'$ be a minimum spanning tree of $G'$. By Lemma~\ref{MST_DELETED_EDGES_CLAIM}, we can
choose $T'$ which uses the same edges as $T$ inside the connected components $A_1, A_2, \ldots, A_{t}$,
and reconnects these components using $t-1$ new edges to replace the removed edges $e_1, e_2, \ldots, e_{t-1}$.

In the following construction we consider the connected components graph $G_{cc}' = (V_{cc},E_{cc}')$, where
$V_{cc} = \{A_1, A_2, \ldots, A_{t}\}$,
and $E_{cc}'$ includes an edge between $A_i$ and $A_j$ for
every pair of vertices $u\in A_i$ and $v\in A_j$ such that $\{u,v\}\in E'$ (with the same weight as the
corresponding edge in $G'$). Notice that $G_{cc}'$ may have many parallel edges, and every edge
of $G_{cc}'$ corresponds to an edge of $G'$. To avoid notational clutter, we will use the same notation
for an edge of $G_{cc}'$ and for the corresponding edge of $G'$. Also, we will use the same notation
for a vertex of $G_{cc}'$ and for the corresponding set of vertices of $G'$ (which is the same as the
vertices of $G$), as well as for a set of vertices
of $G_{cc}'$ and for the union of the corresponding sets of vertices of $G'$. The interpretation will be clear
from the context. The idea behind defining $G_{cc}'$ is to hide the edges that $T$ and $T'$ share inside
the connected components, so that the cuts we construct can't delete them. We use $G_{cc}'$ only in
order to construct the sequence of cuts (these are cuts in $G$).
Let $T_{cc}'$ be a minimum spanning tree of $G_{cc}'$. Note that $T_{cc}'$ has $t-1$ edges, which
we denote by $e_1', e_2', \ldots, e_{t-1}'$. These edges correspond exactly to the new edges of
a minimum spanning tree $T'$ of $G'$ that replace the edges $e_1, e_2, \ldots, e_{t-1}\in T\cap F$.
For the construction, a strict total order on the weights of the edges of $T_{cc}'$ is needed. We
index these edges in non-decreasing order, breaking ties arbitrarily.

\paragraph{Two alternatives.}
For each edge $e_i'$, we first define two alternatives for a partial cut, $C_i^R$ or $C_i^L$, and later
we choose only one of them. This choice is repeated for every edge in $T_{cc}'$ to get the desired
collection of $t-1$ cuts.
Consider $e_i'$ to be an edge in $T_{cc}'$ of weight $w(e_i')$. Delete from $T_{cc}'$ all the edges
$e_j'$, $j\ge i$. Consider the connected components of the resulting forest. Notice that the edge
$e_i'$ must connect between two such components $L_i$ and $R_i$ (recall that these denote
both sets of vertices of $G_{cc}'$ and the corresponding unions of sets of vertices of $G$).
Define the following two partial cuts: $C_i^L = C_G(L_i, w(e_i'))$ and $C_i^R = C_G(R_i, w(e_i'))$.
We will denote by $X_i$ the choice we make between $L_i$ and $R_i$, which we refer to as {\em the
small side} of the cut in step $i$. Also, we put $C_i = C_G(X_i, w(e_i'))$, the cut chosen in step $i$.

\usetikzlibrary{shapes.geometric, intersections}
\begin{figure}
\centering
\tikzset{mynode/.style={draw, very thick, circle, minimum size=1cm}, myarrow/.style={very thick}}

\begin{tikzpicture}
\node[mynode](n5) at (2,-1.5){$b_5$};
\node[mynode](n6) at (60:3){$b_6$};
\node[mynode](n3) at (132:3.2){$b_3$};
\node[mynode](n2) at (204:4){$b_2$};
\node[mynode](n4) at (-15:6){$b_4$};
\node[mynode](n7) at (25:6){$b_7$};
\node[mynode](n1) at (160:6){$b_1$};
\node[mynode](n8) at (16:9){$b_8$};

\draw[myarrow](n1)--(n3) node[midway,above left] {10};
\draw[myarrow](n1)--(n2) node[midway,right] {7};
\draw[myarrow](n3)--(n5) node[midway,below left] {12};
\draw[myarrow](n3)--(n6) node[midway,above] {3};
\draw[myarrow](n6)--(n7) node[midway,above] {5};
\draw[myarrow](n6)--(n7) node[midway,below] {$e_i'$};

\draw[myarrow](n5)--(n4) node[midway, above] {15};
\draw[myarrow](n7)--(n8) node[midway, above] {2};

\node[ellipse, draw, minimum width = 5cm, minimum height = 3.1cm, dashed] (e5) at (7.1,2.5){};
\node[ellipse, draw, minimum width = 6cm, minimum height = 3.1cm, dashed] (e5) at (-0.25,2.5){};

\node (text) at (7.2,3.7) {$C_i^R$};
\node (text) at (-0.2,3.7) {$C_i^L$};

\end{tikzpicture}
\caption{The two options of some edge $e_i'$ over the MST of the connected components graph.}~\label{CREATE_CUT_FIGURE1}
\end{figure}
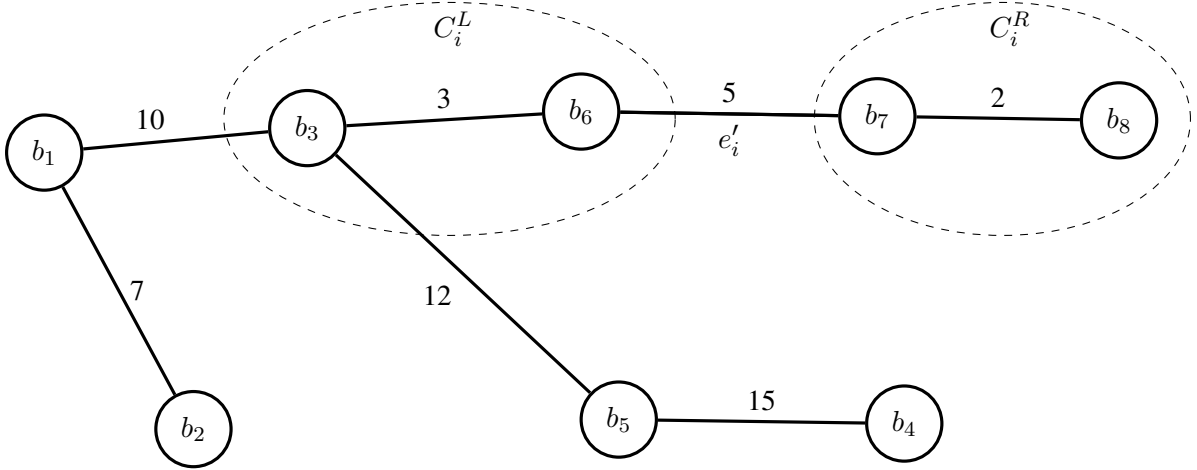

\paragraph{\boldmath Choosing $X_i$.} The goal is that any $A_j \in V_{cc}$ will not be contained in
``too many" $X_i$-s. We count for each vertex $A_j \in V_{cc}$ how many times it was chosen to be in
the small side. Denote this number as $k(A_j)$, and for a set of vertices $S \subseteq V_{cc}$ denote
$k(S) = max_{A_j \in S}(k(A_j))$.

We choose cuts in ascending order of $i$. If $k(L_i) \le k(R_i)$, we choose $X_i = L_i$, and otherwise we
choose $X_i = R_i$. After choosing a new $X_i$, we increase by $1$ the counter $k(A_j)$ for every
vertex $A_j \in X_i$, then proceed to choosing the next cut.

The following lemma shows that the $X_i-s$ form a laminar set system over the vertices.
\begin{lemma}\label{lm: laminar}
Let $i > j$. Then, either $X_i\cap X_j = \emptyset$, or $X_i\supset X_j$.
\end{lemma}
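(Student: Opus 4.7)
The plan is to reduce the laminar property to a simple structural statement about the forests arising in the construction. For each index $k \in \{1,\dots,t-1\}$, I will denote by $F_k := T_{cc}' \setminus \{e_\ell' : \ell \geq k\}$ the forest used when defining the $k$-th pair of candidate sides, so that $L_k$ and $R_k$ are precisely the two connected components of $F_k$ containing the endpoints of $e_k'$. Note that which of $L_i, R_i$ gets selected as $X_i$ via the counters $k(\cdot)$ plays no role in the lemma; only the fact that $X_i \in \{L_i, R_i\}$ (and likewise for $X_j$) will be used.

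The key observation is that for $j < i$, the set of edges removed from $T_{cc}'$ at step $j$ properly contains the set removed at step $i$, so $F_j$ is obtained from $F_i$ by further edge deletions. Since deleting edges from a forest can only split connected components, every connected component of $F_j$ is contained in a unique connected component of $F_i$. Applying this to $X_j$, which by construction is a connected component of $F_j$, tells us that $X_j$ lies inside a single component $P$ of $F_i$. Moreover, because the edge $e_j'$ belongs to $F_i$ and has one endpoint in $L_j$ and one in $R_j$, the component $P$ contains $L_j \cup R_j$, and in particular $P \supsetneq X_j$.

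The proof then concludes with a short case analysis on $P$. If $P$ is a component of $F_i$ distinct from both $L_i$ and $R_i$, then $X_j$ is disjoint from $L_i \cup R_i$, hence from $X_i$. If $P = L_i$, then either $X_i = L_i$, in which case $X_i \supsetneq X_j$ by the strict containment above, or $X_i = R_i$, in which case $X_j \cap X_i \subseteq L_i \cap R_i = \emptyset$ because $L_i$ and $R_i$ are distinct components of $F_i$. The case $P = R_i$ is symmetric. In every case we obtain the desired dichotomy between disjointness and containment.

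I do not anticipate any substantive obstacle here. The entire argument reduces to the fact that edge removal from a forest can only split components, combined with a three-way case split. The only point worth a moment's care is the dual interpretation of $L_i, R_i$ (as sets of meta-vertices of $G_{cc}'$ versus as the corresponding unions of vertices of $G$), but containment and intersection relations transfer verbatim between the two interpretations, so this distinction is harmless.
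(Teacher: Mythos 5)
Your proof is correct and rests on the same key fact as the paper's: since $j < i$, the forest $F_j$ is a subforest of $F_i$, so $X_j$ (being a connected component of $F_j$) sits inside a single component of $F_i$. The paper phrases this pointwise (any two vertices of $X_j$ are joined by a path in $F_j$, which survives in $F_i$, so if one lies in $X_i$ they all do), whereas you phrase it at the component level with an explicit three-way case split and note strictness of the containment; this is a cleaner presentation of the same argument.
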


\begin{proof}
Assume that $X_j \cap X_i \ne \emptyset$. Let $A \in X_j \cap X_i$ be a common vertex. Consider a vertex $A' \in X_j$. 
Clearly, there exists a path $P$ connecting $A$ and $A'$ in the tree $T_{cc}'$, such that every edge in this path precedes $e_j'$ in the non-decreasing order; otherwise, $A$ and $A'$ would not be in the same connected component of the tree $T_{cc}'$ after deleting all the edges preceding $e_j'$.
Because $e_i'$ comes after $e_j'$, it also holds that both $A$ and
$A'$ are in the same component of the tree $T_{cc}'$ after deleting all the edges with index at least $i$.
This implies that if $A \in X_i$, then also $A' \in X_i$, and therefore $X_j \subset X_i$.
\end{proof}

\usetikzlibrary{shapes.geometric, intersections}
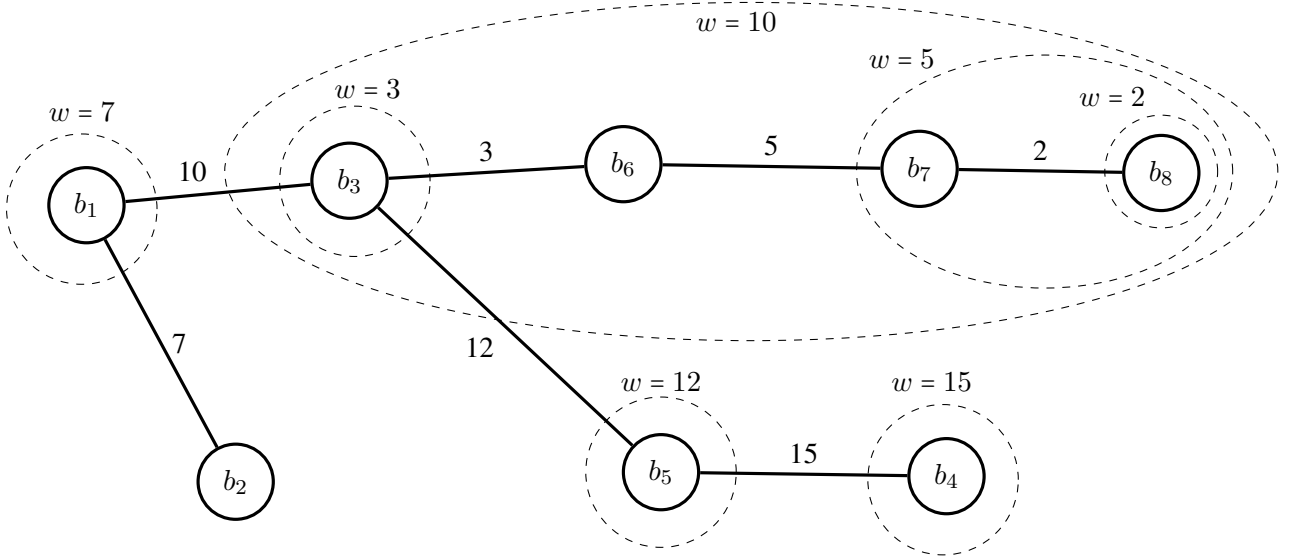
\begin{figure}
\centering
\tikzset{mynode/.style={draw, very thick, circle, minimum size=1cm}, myarrow/.style={very thick}}

\begin{tikzpicture}
\node[mynode](n5) at (2,-1.5){$b_5$};
\node[mynode](n6) at (60:3){$b_6$};
\node[mynode](n3) at (132:3.2){$b_3$};
\node[mynode](n2) at (204:4){$b_2$};
\node[mynode](n4) at (-15:6){$b_4$};
\node[mynode](n7) at (25:6){$b_7$};
\node[mynode](n1) at (160:6){$b_1$};
\node[mynode](n8) at (16:9){$b_8$};

\draw[myarrow](n1)--(n3) node[midway,above left] {10};
\draw[myarrow](n1)--(n2) node[midway,right] {7};
\draw[myarrow](n3)--(n5) node[midway,below left] {12};
\draw[myarrow](n3)--(n6) node[midway,above] {3};
\draw[myarrow](n6)--(n7) node[midway,above] {5};
\draw[myarrow](n5)--(n4) node[midway, above] {15};
\draw[myarrow](n7)--(n8) node[midway, above] {2};

\node[ellipse, draw, minimum width = 14cm, minimum height = 4.5cm, dashed] (e1) at (3.2,2.5){};
\node[ellipse, draw, minimum width = 2cm, minimum height = 2cm, dashed] (e2) at (2,-1.5){};
\node[ellipse, draw, minimum width = 2cm, minimum height = 2cm, dashed] (e3) at (5.75,-1.6){};
\node[ellipse, draw, minimum width = 2cm, minimum height = 2cm, dashed] (e4) at (-5.7,2){};

\node[ellipse, draw, minimum width = 2cm, minimum height = 2cm, dashed] (e6) at (-2.07,2.37){};
\node[ellipse, draw, minimum width = 1.5cm, minimum height = 1.5cm, dashed] (e6) at (8.65,2.5){};
\node[ellipse, draw, minimum width = 5cm, minimum height = 3.1cm, dashed] (e5) at (7.1,2.5){};

\node (text) at (3,4.5) {$w=10$};
\node (text) at (2, -0.3) {$w=12$};
\node (text) at (5.6, -0.3) {$w=15$};
\node (text) at (-5.7,3.3) {$w=7$};
\node (text) at (5.2,4) {$w=5$};
\node (text) at (-1.9,3.6) {$w=3$};
\node (text) at (8,3.5) {$w=2$};

\end{tikzpicture}
\caption{Example for cuts created according to MST of connected components graph.}~\label{CREATE_CUT_FIGURE}
\end{figure}

We now show that the first claim of Theorem~\ref{CUTS_THEOREM} holds.
\begin{lemma}\label{SUMS_CLAIM}
The sum of the costs of the cuts $C_1, C_2, \ldots, C_{t-1}$ is
\[\sum_{i=1}^{t-1} c(C_i) \leq 2B\cdot \log t \leq 2B\cdot \log n.\]
\end{lemma}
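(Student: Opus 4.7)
The plan is to write $\sum_i c(C_i) = \sum_e c(e)\cdot|\{i : e \in C_i\}|$ and then establish two facts: (a) every edge appearing in any $C_i$ belongs to $F$, so only edges of $F$ contribute to the sum; and (b) each edge $e \in F$ appears in at most $2\log t$ of the cuts $C_i$. Combining these yields $\sum_i c(C_i) \leq 2\log t \cdot c(F) = 2B\log t \leq 2B\log n$, since $t \leq n$.

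For (a), I would argue by contradiction. Suppose $e = \{u,v\} \in C_i$ with $e \in E'$; then $e$ projects to an edge between two distinct components $A_u, A_v$ in $G_{cc}'$, with exactly one of them in $X_i$, say $X_i = L_i$, $A_u \in L_i$, $A_v \notin L_i$. Since $T_{cc}'$ is an MST of $G_{cc}'$, the cycle property forces every tree edge on the $A_u$-to-$A_v$ path in $T_{cc}'$ to have weight at most $w(e) < w(e_i')$. Split into cases by which subtree of $T_{cc}' \setminus e_i'$ contains $A_v$: if $A_v$ lies in the subtree containing $R_i$, the path uses $e_i'$, giving $w(e_i') \leq w(e) < w(e_i')$. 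Otherwise $A_v$ lies in the subtree containing $L_i$ but outside $L_i$ itself, and the path must exit $L_i$ via some tree edge $e_j'$ with $j > i$ (since $L_i$ is a component of $T_{cc}' \setminus \{e_k' : k \geq i\}$ and $e_i'$ itself is not in the relevant subtree), so the non-decreasing weight ordering of the $e_j'$ yields $w(e_i') \leq w(e_j') \leq w(e) < w(e_i')$. Both cases are contradictions, so $e \in F$.

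For (b), the core estimate is $k(A_j) \leq \log t$ for every $A_j$. I would prove by induction on the merging process the invariant that every current component $C$ satisfies $|C| \geq 2^{\max_{A \in C} k(A)}$: after merging $L_i$ and $R_i$ (with $k(L_i) \leq k(R_i)$, hence $X_i = L_i$), the new maximum is $\max(k(L_i)+1, k(R_i))$, which equals $k(R_i)$ when $k(L_i) < k(R_i)$ (and the merged size already exceeds $2^{k(R_i)}$) and equals $k(L_i)+1$ when the two values are tied (and then both merged pieces have size at least $2^{k(L_i)}$ by induction). Since $|V_{cc}| = t$, no counter can exceed $\log t$. Given this, observe that $e \in F$ with endpoints in $A_u \neq A_v$ lies in $C_i$ only when exactly one of $A_u, A_v$ is in $X_i$, so the relevant set of indices is contained in $I_u \triangle I_v$, where $I_w = \{i : A_w \in X_i\}$; hence $|\{i : e \in C_i\}| \leq |I_u| + |I_v| = k(A_u) + k(A_v) \leq 2\log t$. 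Edges internal to a single $A_j$ cross no $X_i$-cut and contribute zero.

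The main difficulty I expect is the second case of (a): the direct invocation of the cycle property gives the desired contradiction only when the $A_u$-to-$A_v$ path in $T_{cc}'$ actually traverses $e_i'$, which need not happen when both endpoints lie in the same subtree of $T_{cc}' \setminus e_i'$. Bridging this gap requires using the strict total ordering on $T_{cc}'$ edges (with ties broken by index) to force a later tree edge $e_j'$ onto the path and thereby still produce the weight contradiction.
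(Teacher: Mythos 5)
Your proposal is correct and follows essentially the same two-step decomposition as the paper: first show that every edge appearing in some $C_i$ belongs to $F$ (the paper's Claim~\ref{cl: cut in G}, proved via the same MST exchange/cycle-property argument applied to the tree path in $T_{cc}'$), then show that no vertex of $V_{cc}$ enters a small side more than $\log t$ times (the paper's Claim~\ref{cl: num cuts}). Your phrasing of part (b) as the invariant $|C|\ge 2^{\max_{A\in C} k(A)}$ is just the contrapositive of the paper's inductive bound $k(A)\le\log(|L_i|+|R_i|)$ and packages the same doubling observation a little more cleanly, but it is the same argument.
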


The proof of Lemma~\ref{SUMS_CLAIM} relies on the following claims.
\begin{claim}\label{cl: cut in G}
For every $i=1,2,\dots, t$, if $e \in C_i$ then $e \in F$.
\end{claim}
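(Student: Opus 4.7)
The goal is to show that any edge $e \in C_i$ must lie in the deleted set $F$. I would argue by contradiction: assume $e \in C_i \setminus F$ and derive a violation of the minimality of $T'_{cc}$.

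First I would unpack what membership in $C_i$ means. By definition $C_i = C_G(X_i, w(e'_i))$, so $e$ crosses the cut $(X_i, V \setminus X_i)$ in $G$ and has weight $w(e) < w(e'_i)$. Because $X_i \subseteq V_{cc}$ is a union of entire components $A_j$ (viewed as subsets of $V$), an edge of $G$ crossing this cut must have its two endpoints in two distinct components $A_j, A_k$ with $A_j \subseteq X_i$ and $A_k \subseteq V_{cc}\setminus X_i$.

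Next I would use the assumption $e \notin F$, which means $e \in E'$. Then the edge $e$, with endpoints in two distinct components $A_j, A_k$, corresponds to an edge of the contracted graph $G'_{cc}$ that crosses the cut $(X_i, V_{cc}\setminus X_i)$. The heart of the argument is to identify which edges of $T'_{cc}$ cross this same cut. Recall that $X_i$ is one of the two components obtained by deleting all edges $e'_\ell$ with $\ell \ge i$ from the tree $T'_{cc}$; therefore every edge of $T'_{cc}$ with one endpoint in $X_i$ and one outside must be among $\{e'_\ell : \ell \ge i\}$. Since edges are indexed in non-decreasing weight order, each such edge satisfies $w(e'_\ell) \ge w(e'_i)$.

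Finally I would invoke the standard MST cut property: since $e$ is an edge of $G'_{cc}$ crossing the cut $(X_i, V_{cc}\setminus X_i)$ with $w(e) < w(e'_i) \le w(e'_\ell)$ for every edge $e'_\ell$ of $T'_{cc}$ that crosses this cut, removing any such $e'_\ell$ from $T'_{cc}$ and inserting $e$ produces a spanning tree of $G'_{cc}$ of strictly smaller weight. This contradicts the minimality of $T'_{cc}$. Hence the assumption $e \notin F$ is false, and $e \in F$. The only subtle step is the correct identification of $T'_{cc}$-edges crossing the cut, which follows directly from how $X_i$ was defined in terms of deleting $\{e'_\ell : \ell \ge i\}$; everything else is a routine application of the MST cut property.
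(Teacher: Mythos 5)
Your approach is essentially the paper's: argue by contradiction that a non-$F$ edge $e\in C_i$ would let you build a cheaper spanning tree of $G_{cc}'$, contradicting the minimality of $T_{cc}'$, after observing that every $T_{cc}'$-edge leaving $X_i$ has index at least $i$ and hence weight at least $w(e_i')>w(e)$. That observation is correct and is the heart of both proofs. However, your final exchange step is stated in a way that does not actually hold. You assert that ``removing any such $e'_\ell$ from $T'_{cc}$ and inserting $e$ produces a spanning tree.'' This is false in general: removing a tree edge $e'_\ell$ splits $T_{cc}'$ into exactly two pieces, and inserting $e$ yields a tree only if the endpoints of $e$ lie in those two different pieces; that is not guaranteed merely by the fact that both $e$ and $e'_\ell$ cross the cut $(X_i, V_{cc}\setminus X_i)$. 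As a concrete failure, let $T_{cc}'$ be a star with center $b$ and leaves $a,c,d$, with $X_i=\{a,b\}$. Both $bc$ and $bd$ cross the cut. If $e=ac$, then deleting $bd$ and inserting $e$ gives the edge set $\{ab,bc,ac\}$, which contains a cycle on $\{a,b,c\}$ and leaves $d$ isolated --- not a spanning tree. The paper's proof sidesteps this by looking at the path in $T_{cc}'$ between the two endpoints of $e$ (equivalently, the cycle closed by $e$) and exchanging with a tree edge on that path, which is guaranteed to give a spanning tree. Your argument is repaired by the same fix: the cycle that $e$ closes in $T_{cc}'$ must contain at least one tree edge crossing the cut, and that edge has weight at least $w(e_i')>w(e)$, so the exchange is valid and strictly cheaper.

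A second, smaller slip: you describe $X_i$ as ``one of the two components obtained by deleting all edges $e'_\ell$ with $\ell\geq i$.'' Deleting those edges generally produces more than two components; $X_i$ is one of the (possibly many) components, specifically an endpoint-component of $e_i'$. This does not affect your key claim that every $T_{cc}'$-edge leaving $X_i$ has index at least $i$, which is correct.
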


\begin{proof}
Let’s assume for contradiction that there exists an edge $e \in C_i$, but $e \notin F$.
By construction, $w(e) < w(e_i')$. Recall that $C_i$ is the set of edges with exactly
one endpoint in a connected component $X_i$ of $T_{cc}'$ after removing edges of
weight $\ge w(e_i')$. In particular, $e = \{u,v\}$, where $u\in A_i\in X_i$ and
$v\in A_j\not\in X_i$. Consider the path in $T_{cc}'$ between $A_i$ and $A_j$.
There must be an edge $e'$ of weight $w(e') \geq w(e_i')$ along this path, otherwise
both $u$ and $v$ would be on the same side of the cut. We assumed that $e\not\in F$,
hence $e \in E_{cc}'$. But if $w(e) < w(e_i') \leq w(e')$ and $e \in E_{cc}'$, then replacing
$e'$ with $e$ in $T_{cc}'$ creates a spanning tree of $G_{cc}'$ which is lighter than $T_{cc}'$,
a contradiction.
\end{proof}

Next we show that no edge gets deleted by more than $2 \cdot \log t$ cuts.
\begin{claim}\label{cl: num cuts}
For any edge $e \in E$, $e$ crosses no more than $2 \cdot \log t$ cuts $C_1,C_2,\dots,C_{t-1}$.
\end{claim}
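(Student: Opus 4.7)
The plan is to reduce the claim to bounding $k(A) \le \log t$ for every $A \in V_{cc}$, then establish this bound by an inductive doubling argument tied to the selection rule. For the reduction, let $e = \{u, v\}$ with $A = A_{j(u)}$ and $A' = A_{j(v)}$ being the components of $G_{cc}'$ containing $u$ and $v$. If $A = A'$, then $u$ and $v$ lie in the same $A_j$ and since each $X_i$ is a union of the $A_j$'s, the edge $e$ crosses no $C_i$. Otherwise $e$ can cross $C_i$ only when exactly one of $A, A'$ belongs to $X_i$ (the weight condition $w(e) < w(e_i')$ only restricts further). Hence the number of cuts that $e$ crosses is at most $|\{i : A \in X_i\}| + |\{i : A' \in X_i\}| = k(A) + k(A')$, and it suffices to show $k(A) \le \log t$ for every $A$.

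For the main bound, I will maintain, by induction on the merge steps viewed as the Kruskal-style addition of the edges $e_1', e_2', \ldots, e_{t-1}'$ in non-decreasing weight order, the invariant that at every time, every $A \in V_{cc}$ satisfies $|H_A| \ge 2^{k(A)}$, where $H_A$ is the current connected component of $A$ in the partial forest. The invariant holds initially, when every $A$ is a singleton with $k(A) = 0$. When step $i$ merges components $L_i$ and $R_i$, the rule chooses $X_i$, say $L_i$, as the side with $k(L_i) \le k(R_i)$. Applying the invariant to a vertex of $L_i$ realizing its $k$-value maximum yields $|L_i| \ge 2^{k(L_i)}$, and similarly $|R_i| \ge 2^{k(R_i)} \ge 2^{k(L_i)}$. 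After the step, every $A \in L_i$ sees its $k$-value increase by one, but its new component has size $|L_i| + |R_i| \ge 2 \cdot 2^{k(L_i)} \ge 2^{k(A)+1}$, preserving the invariant. Vertices in $R_i$ keep their $k$-values unchanged and only gain component mass, so the invariant is maintained there too.

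Since component sizes never exceed $t$, the invariant forces $k(A) \le \log t$ for every $A$, and combining with the reduction gives at most $2\log t$ cuts crossed by any edge. I expect the main subtlety to be in the inductive step: because the selection rule compares $k$-value maxima rather than sizes, one must translate a $k$-value lower bound on the other side back into a size lower bound, which is precisely what the invariant provides. Everything else is bookkeeping.
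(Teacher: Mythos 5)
Your proof is correct, and the invariant you maintain, $|H_A| \ge 2^{k(A)}$, is exactly the paper's inductive claim $k(A) \le \log(|L_i|+|R_i|)$ restated, since at the end of step $i$ the current component of every $A \in L_i \cup R_i$ has size $|L_i|+|R_i|$. The reduction to $k(A)+k(A') \le 2\log t$ is also the same as the paper's. Where you diverge is in the inductive step: the paper first singles out the side of cardinality at most $\tfrac{1}{2}(|L_i|+|R_i|)$, traces the max-$k$ vertex on that side back to the step $j<i$ at which it was last placed in a small side, invokes the induction hypothesis at step $j$, and then does a two-case split on whether $X_i$ is that smaller side; you instead apply the invariant once to each of $L_i$ and $R_i$ and use the selection rule $k(X_i)\le k(L_i\cup R_i\setminus X_i)$ to get $|L_i|,|R_i|\ge 2^{k(X_i)}$, so the merged component has size at least $2^{k(X_i)+1}$ and the invariant is preserved in one stroke. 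Both arguments hinge on the same mechanism (the rule increments counters only on the side whose maximum counter is not larger), but your version dispenses with the smaller-half detour and the attendant case analysis, and is the tidier way to present it.
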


\begin{proof}
Let $\{A_1, A_2, \ldots A_{t}\}$ be the vertex set of the graph $G_{cc}'$, and consider the final counts
$k(A_1)$, $k(A_2)$, $\ldots$, $k(A_t)$ for the vertices, respectively, after the construction of the cuts
$C_1, C_2, \ldots C_{t-1}$ as described above. We show that for every $1 \leq i \leq t$, it holds that
$k(A_i) \leq \log(t)$. This means that every vertex can't be in the small side of a cut more than $\log(t)$
times, and therefore an edge can't cross more than $2 \cdot \log(t)$ cuts.

We first show that for every $A \in L_i \cup R_i$ it holds that at the end of step $i$, $k(A) \leq \log(|L_i|+|R_i|)$
(viewing $L_i,R_i$ as sets of vertices in $G_{cc}'$). The proof is by induction on $i$.

{\em Base case:} Notice that $|L_1|,|R_1|\ge 1$. Therefore, $\log(|L_i|+|R_i|) = 1$.
As only one step was executed, for every vertex $A$, we have that $k(A) \leq 1$, as required.

{\em Inductive step:} Assume the claim is true for every $j < i$. Let $s = |R_i| + |L_i|$. It must hold
that either $|L_i| \leq \frac{s}{2}$ or $|R_i| \leq \frac{s}{2}$. Assume without loss of generality that
$|R_i| \leq \frac{s}{2}$. Consider the largest value of $k(A)$ for any $A \in R_i$ at the end of step
$i-1$. For this $A$, let $X_j$ be a small side that contains $A$, for the largest such $j < i$.
As $w(e_j')\le w(e_i')$, it must be that $R_j,L_j \subset R_i$. This is true because the edges of
weight below $w(e_i')$ include the edges of weight at most $w(e_j')$, so $R_j,L_j$ are in the same
component $R_i$ (as $A \in R_i\cap X_j$). Because $R_j\cap L_j = \emptyset$ it holds that
$|R_j| + |L_j| \leq |R_i| \leq \frac{s}{2}$. By the induction hypothesis, at the end of step $j$, we
have that $k(A) \leq \log(\frac{s}{2}) = \log(s) - 1$. Therefore, this is true also at the end of step
$i-1$, because $k(A)$ did not change after step $j$ and before step $i$. If $X_i = R_i$, then after
step $i$, we get that $k(A)$ increases by $1$ to $\log(s)$, as claimed. If $X_i = L_i$, then $k(A)$
does not change in step $i$, so it holds that $k(A)\leq \log(s) - 1 <  \log(s)$.

Finally, consider $A'\in L_i$. By the same argument as for $R_i$, at the end of step $i-1$ we
have that $k(A')\leq \log(|L_i|)\leq \log(s-1) < \log(s)$. If $X_i\neq L_i$, then this holds after
step $i$. Otherwise, by the choice of $X_i$ it must hold that before step $i$,
$k(A')\leq k(A)\leq \log(s) - 1$. Therefore, after step $i$, $k(A')\leq \log (s)$.
\end{proof}

\begin{proofof}{Lemma~\ref{SUMS_CLAIM}}
By Claim~\ref{cl: cut in G}, any edge included in at least one of the partial cuts is included in $F$.

By Claim~\ref{cl: num cuts}, no edge crosses more than $2 \cdot \log(t)$ partial cuts. Because the
cuts use only edges from $F$ and use each edge up to $2 \cdot \log(t)$ times, the sum of their costs
is no more than $2B \cdot \log(t) \leq 2B \cdot \log n$.
\end{proofof}

It remains to show the second claim of Theorem~\ref{CUTS_THEOREM}.
\begin{lemma}\label{PROFIT_CLAIM}
For the cuts $C_1, C_2, \ldots, C_{t-1}$ it holds that
\[\sum_{i=1}^{t-1} p_G(C_i) \geq \Delta\]
\end{lemma}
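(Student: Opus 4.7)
The plan is to produce a bijection $\pi:\{1,\ldots,t-1\}\to\{1,\ldots,t-1\}$ such that the edge $e_{\pi(i)}\in T\cap F$ crosses $C_G(X_i)$ for every $i$. Granting $\pi$, Lemma~\ref{PROFIT_OF_CUT} gives $p_G(C_i)\ge w(e_i')-w(e_{\pi(i)})$, and summing yields
\[
\sum_{i=1}^{t-1} p_G(C_i) \ \ge\ \sum_{i=1}^{t-1} w(e_i')\ -\ \sum_{i=1}^{t-1} w(e_{\pi(i)})\ =\ \sum_{i=1}^{t-1} w(e_i')\ -\ \sum_{j=1}^{t-1} w(e_j)\ =\ \Delta,
\]
where the middle equality uses bijectivity of $\pi$ and the last uses Lemma~\ref{MST_DELETED_EDGES_CLAIM}: choosing $T'$ to agree with $T$ on all edges inside the $A_k$'s gives $\Delta=w(T')-w(T)=\sum_i w(e_i')-\sum_j w(e_j)$.

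To construct $\pi$ I verify Hall's condition in the bipartite graph $B$ whose left vertices are $\mathcal{L}=\{X_i\}$, whose right vertices are the edges of $T_{cc}$, and in which $X_i\sim e_j$ iff $e_j$ crosses $(X_i,V_{cc}\setminus X_i)$. The key structural input is that the $X_i$'s arise from a full binary hierarchy $H$ on $V_{cc}$: processing $e_1',\ldots,e_{t-1}'$ in non-decreasing weight order, each merge $L_i\cup R_i$ becomes an internal $H$-node $v_i$ whose two $H$-children are $L_i$ and $R_i$, with the $t$ singletons $\{A_j\}$ as leaves. Thus $X_i$ is the selected child of $v_i$, and its sibling $Y_i=\{L_i,R_i\}\setminus\{X_i\}$ is an $H$-node that cannot equal any $X_j$ for $j\ne i$, since distinct internal nodes of $H$ have disjoint sets of $H$-children.

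Let $\tau$ be the laminar tree of $\mathcal{L}\cup\{V_{cc}\}$ rooted at $V_{cc}$. I claim that every atom $\alpha_\tau(N)=N\setminus\bigcup_{M\,\text{child of}\,N\,\text{in}\,\tau} M$ is non-empty. This follows by induction on the size of the $H$-subtree from the auxiliary statement: for every $H$-node $v$ that is not itself some $X_j$, the set $v\setminus\bigcup\{X\in\mathcal{L}:X\subseteq v\}$ is non-empty. The base case of a leaf is immediate. In the inductive step, for $v=v_k$ not an $X$, the maximal $X\in\mathcal{L}$ contained in $v_k$ are $X_k$ itself together with the maximal $X\in\mathcal{L}$ contained in $Y_k$; since $Y_k$ is not an $X$, the inductive hypothesis applied to $Y_k$ produces an element of $Y_k$ left uncovered, which witnesses the claim at $v_k$. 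Applying the auxiliary statement at the root of $H$ (not an $X$, since $X_i\ne V_{cc}$) gives $\alpha_\tau(V_{cc})\ne\emptyset$; applying it at $Y_k$ whenever $X_i=v_k$ is internal gives $\alpha_\tau(X_i)\ne\emptyset$; singleton $X_i$'s are their own atoms.

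Because atoms can only grow when passing to a sub-family, every sub-laminar family $\mathcal{L}'\subseteq\mathcal{L}$ induces $|\mathcal{L}'|+1$ non-empty atoms partitioning $V_{cc}$. Contracting $T_{cc}$ along these atoms yields a connected multigraph on $|\mathcal{L}'|+1$ vertices, which has at least $|\mathcal{L}'|$ edges; by laminarity, an edge of $T_{cc}$ crosses some $X\in\mathcal{L}'$ iff its endpoints lie in distinct atoms. Hall's condition $|N(\mathcal{L}')|\ge|\mathcal{L}'|$ follows, and $\pi$ exists. The main obstacle is the non-emptiness of atoms: for a general laminar family (e.g., $\{A_1\},\{A_2\},\{A_1,A_2\}$), the atom at $\{A_1,A_2\}$ is empty and Hall can fail, but the construction's rule of selecting exactly one of $L_i,R_i$ at each $v_i$ precludes such degeneracies.
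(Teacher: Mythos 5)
Your proof is correct and follows essentially the same strategy as the paper's: establish Hall's condition to extract a perfect matching $\pi$ between the cuts $C_i$ and the removed tree edges, then combine Lemma~\ref{PROFIT_OF_CUT} with the identity $\Delta=\sum_i w(e_i')-\sum_j w(e_j)$. The only difference is in how the Hall condition is verified: the paper produces a \emph{typical vertex} for each $X_i$ directly (an endpoint of the minimal-index child edge $e_{j_1}'$ lying outside $X_{j_1}$) and then argues via ``areas,'' whereas you induct on the binary Kruskal merge hierarchy $H$, tracking the never-selected siblings $Y_i$ to show the laminar-tree atoms are non-empty; your atoms are precisely the paper's areas, so the two arguments coincide up to packaging, with yours making the role of the binary merge structure and the one-side-per-merge selection rule somewhat more explicit.
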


The first step of the proof is to show a perfect matching between the edges that
were removed from $T$ (a minimum spanning tree of $G$) and the new edges that
replaced them in $T'$ (a minimum spanning tree of $G \setminus F$). We will use this
matching to argue that for every matched pair there is a cut with a profit of at least the
difference of weights between the edges, and this will be used to lower bound the
total profit. To show the existence of a perfect matching, we employ Hall's condition.
We begin with the following claim.
\begin{claim}\label{cl: typical vertex}
Let $C_1, C_2, \ldots, C_{t-1}$ be the cuts induced by $T$ and $F$. Then, for every
cut $C_i$ there exists a vertex $v_i \in V_{cc}$ such that $v_i \in X_i$, but $v_i \notin X_j$
for all $j < i$. Moreover, there exists a vertex $u \in V_{cc}$ such that $u \notin X_j$
for all $j \in [t-1]$.
\end{claim}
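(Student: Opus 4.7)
The plan is to establish both parts of the claim uniformly by proving a single strengthened statement: for every integer $N$ with $1 \le N \le t$ and every connected component $W$ of the forest $T_{cc}' \setminus \{e_l' : l \ge N\}$ (with the convention that $N = t$ removes no edges, so that $V_{cc}$ itself is the unique such component), there exists a vertex $v \in W$ satisfying $v \notin X_j$ for all $j < N$ with $X_j \subseteq W$. Taking $(W, N) = (X_i, i)$ recovers the first sentence of the claim, while $(W, N) = (V_{cc}, t)$ recovers the ``moreover'' part, since then $X_j \subseteq W$ automatically for every $j \in [t-1]$.

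I would prove this statement by induction on $|W|$. The base case $|W| = 1$ is immediate: if $W = \{w\}$ is a component at level $N$, then no edge of $T_{cc}'$ incident to $w$ has index below $N$, hence no $X_j$ with $j < N$ can fit inside $W$, and the single vertex $w$ works. For the inductive step ($|W| \ge 2$), I would let $e^* = e_{j^*}'$ be the edge of $T_{cc}'$ of maximum index among those with both endpoints in $W$. Since all such edges have index below $N$, we have $j^* < N$, and inside $W$ the only edge removed at step $j^*$ is $e^*$ itself; therefore $W$ splits into precisely the two step-$j^*$ components $L_{j^*}$ and $R_{j^*}$, one of which is $X_{j^*}$ by construction. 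Assume without loss of generality $X_{j^*} = L_{j^*}$. I would then apply the inductive hypothesis to $R_{j^*}$, a component at level $j^*$ of strictly smaller size, to obtain a vertex $v \in R_{j^*}$ that avoids every $X_l$ with $l < j^*$ and $X_l \subseteq R_{j^*}$.

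To finish, I would verify that this $v$ avoids $X_j$ for every $j < N$ with $X_j \subseteq W$. The routine cases are $j = j^*$ (since $v \in R_{j^*}$ is disjoint from $X_{j^*} = L_{j^*}$) and $j < j^*$ (by Lemma~\ref{lm: laminar} together with the partition $W = L_{j^*} \sqcup R_{j^*}$, any such $X_j$ lies entirely inside $L_{j^*}$ or entirely inside $R_{j^*}$, so either it is disjoint from $v$ or the inductive hypothesis already gives $v \notin X_j$). The step I expect to be the main obstacle is ruling out indices $j$ with $j^* < j < N$ and $X_j \subseteq W$. I would handle this by showing no such $j$ can exist: every edge inside $W$ has index at most $j^* < j$, while every edge crossing out of $W$ has index at least $N > j$ (otherwise $W$ would fail to be a component at level $N$), so $W$ is simultaneously a connected component at level $j$; but then $X_j \subseteq W$, being itself a step-$j$ component, must actually equal $W$, which places the other endpoint of $e_j'$ outside $W$ and contradicts the lower bound $N$ on boundary-edge indices. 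This closes the induction and proves both parts.
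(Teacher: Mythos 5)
Your proof is correct, and it takes a genuinely different route from the paper. The paper's argument is a one-shot selection: for a fixed $X_i$, it restricts attention to the maximal sets $X_{j_1}\subsetneq X_{j_2}\subsetneq\cdots$ among the $X_j\subsetneq X_i$ (maximality in the inclusion order), observes via laminarity that avoiding these maximal sets suffices, and then exhibits the typical vertex explicitly as the endpoint of $e_{j_1}'$ that lies outside $X_{j_1}$, where $j_1$ is the smallest index among the maximal sets; the key step is showing this endpoint also escapes every other maximal $X_{j_r}$ because $e_{j_1}'$ is never deleted when forming $X_{j_r}$. The ``moreover'' part is obtained by running the same argument with $V_{cc}$ in place of $X_i$.

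You instead prove a recursive strengthening parametrized by an arbitrary level-$N$ component $W$, and find the vertex by induction on $|W|$: split $W$ along the maximum-index internal edge $e_{j^*}'$ into $L_{j^*}\sqcup R_{j^*}$, descend into the side not chosen as $X_{j^*}$, and then separately dispose of $j=j^*$, $j<j^*$ (via laminarity and the refinement of level-$j$ components by level-$j^*$ components), and $j^*<j<N$ (by showing $W$ would have to be a level-$j$ component equal to $X_j$, forcing $e_j'$ to cross $W$, contradicting $j<N$). Both arguments hinge on laminarity and on the fact that boundary edges of a level-$N$ component have index $\geq N$ while interior edges have index $<N$. Your inductive formulation is somewhat longer and has to explicitly rule out the middle index range, but it unifies the two assertions of the claim under one statement and avoids reasoning about the poset of maximal proper subsets; the paper's version gives the typical vertex by a closed-form description (an endpoint of a specific tree edge) in a single step.
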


We refer to $v_i$ as the {\em typical vertex} of $C_i$, and to $u$ as
the {\em typical vertex} of the graph; see Figure~\ref{CREATE_CUT_FIGURE};
$b_6$ is the typical vertex of the cut with the weight of $w=10$, and $b_2$ is the
typical vertex of the graph.
\bigskip

\begin{proofof}{Claim~\ref{cl: typical vertex}}
Consider $X_i$. Every edge $e \in T_{cc}'$ with both endpoints in $X_i$ has index $<i$ and every edge $e \in T_{cc}'$ that connects between $X_i$ and
$V_{cc} \setminus X_i$ must have index $\geq i$. 
Clearly, if $X_j\cap X_i=\emptyset$,
then any choice of $v_i\in X_i$ will satisfy $v_i\not\in X_j$. By Lemma~\ref{lm: laminar},
all other $j < i$ satisfy $X_j\subsetneq X_i$.

It is sufficient to consider all $j$ such that
$X_j\subsetneq X_i$ and $\not\exists j'$ such that $X_j\subsetneq X_{j'}\subsetneq X_i$.
Let $j_1 < j_2 < \cdots < j_l$ be an enumeration of these indices. Notice that
$|X_{j_1}\cap e_{j_1}'| = 1$. Moreover, for $r > 1$, $X_{j_r}\cap e_{j_1}' = \emptyset$.
This is because $e_{j_1}'$ and all the edges with two endpoints in $X_{j_1}$ are not
deleted when constructing $X_{j_r}$. Therefore, if $X_{j_r}\cap e_{j_1}' \neq \emptyset$,
then $X_{j_r}\supset X_{j_1}$, in contradiction to the definition of $j_1$. However,
$X_i\supset X_{j_r}$ for all $r$, and $e_{j_1}'$ is not deleted when constructing
$X_i$, hence both endpoints of $e_{j_1}'$ are contained in $X_i$. The endpoint
not contained in $X_{j_1}$ can be used as the typical vertex $v_i$ of $C_i$.

The same argument, applied with $V_{cc}$ replacing $X_i$, proves the existence
of the typical vertex $u$ of $G$.
\end{proofof}

Next we prove the following claim.
\begin{claim}\label{K_EDGES_EXIST}
For every $k\in\{1,2,\dots,t-1\}$ and for every $A \subseteq \{C_1, C_2, \ldots, C_{t-1} \}$
of cardinality $|A| = k$, there exist at least $k$ distinct edges
$e_1, \ldots e_k\in F \cap T$ that cross at least one of the cuts in $A$.
\end{claim}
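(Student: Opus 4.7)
The plan is to view the edges of $F\cap T$ as a spanning tree $\widehat T$ of the contracted vertex set $V_{cc}$: removing $F\cap T$ from $T$ yields the forest whose components are $A_1,\ldots,A_t$, so after contracting each $A_j$ to a single vertex, the $t-1$ edges of $F\cap T$ form a spanning tree of $V_{cc}$. I will combine this with the laminar structure of the small sides to partition $V_{cc}$ into $k+1$ non-empty atoms and then extract the $k$ desired edges by a connectivity argument on $\widehat T$.

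Given $A=\{C_{i_1},\ldots,C_{i_k}\}$, set $\mathcal L=\{X_{i_1},\ldots,X_{i_k}\}\cup\{V_{cc}\}$; by Lemma~\ref{lm: laminar} this is a laminar family. Its atoms, defined for each $X\in\mathcal L$ as the vertices whose unique minimal containing set in $\mathcal L$ is $X$, partition $V_{cc}$ into at most $k+1$ classes. I will use Claim~\ref{cl: typical vertex} to argue every atom is non-empty: for $X_{i_j}$, the typical vertex $v_{i_j}$ lies in $X_{i_j}$ and in no $X_l$ with $l<i_j$, and by Lemma~\ref{lm: laminar} any $X_l\in\mathcal L$ strictly contained in $X_{i_j}$ must satisfy $l<i_j$; hence $v_{i_j}$ lies in the atom of $X_{i_j}$, while the graph's typical vertex $u$ supplies a representative for the root atom.

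Next I will verify by a short laminar case analysis that every edge of $\widehat T$ whose endpoints lie in two distinct atoms crosses at least one $X_{i_j}\in\mathcal L$. If one atom is the root and the other is the atom of $X_{i_j}$, the edge has one endpoint in $X_{i_j}$ and one outside it. If both atoms come from sets $X_{i_j}$ and $X_{i_{j'}}$ in $\mathcal L$, the laminar property forces them to be either disjoint or nested; in the disjoint case the edge crosses both, and in the nested case (say $X_{i_j}\subsetneq X_{i_{j'}}$) the atom of the outer set omits $X_{i_j}$ by minimality, so the edge still crosses $X_{i_j}$. Finally, contracting each atom to a single vertex preserves the connectivity of $\widehat T$, producing a connected multigraph on $k+1$ vertices which, after discarding self-loops, still has at least $k$ edges. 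These lift to $k$ distinct cross-atom edges of $\widehat T=F\cap T$, each crossing at least one cut in $A$, as required.

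The main obstacle is a clean execution of the nested case in the cross-atom analysis: one needs to verify that the atom of the outer set genuinely excludes the inner set, which follows from defining atoms via minimal containing membership and invoking Lemma~\ref{lm: laminar} to translate containment in $\mathcal L$ into the ordering of indices. After that, the remainder is a standard ``connected graph on $k+1$ nodes has at least $k$ edges'' count applied to the contracted image of $\widehat T$.
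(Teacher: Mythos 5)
Your proof is correct and takes essentially the same approach as the paper: both treat $F\cap T$ as a spanning tree on $V_{cc}$, partition $V_{cc}$ into $k+1$ non-empty classes using the typical vertices (your laminar ``atoms'' coincide with the paper's ``areas'' equivalence classes), and then count cross-class tree edges via connectivity of the contracted graph. The presentation via laminar atoms and explicit contraction is a cosmetic variant of the paper's area argument, not a different route.
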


\begin{proof}
The edges in $F \cap T$ by definition form a spanning tree on $V_{cc}$ equipped with
the edges of the original graph $G$.

Given a set $A$ of $k$ cuts, we say that two vertices $A_i,A_j \in V_{cc}$ are {\em in the same area}
if and only if they are in the same side of any cut in $A$. Notice that the partition into areas is an
equivalence relation as it is reflexive, symmetric and transitive. We claim that there are at least
$k+1$ different areas, because there are at least $k+1$ vertices for which every pair of them is not
in the same area (separated with at least one cut). Each of the cuts in $A$ has a typical vertex. Any
pair of two typical vertices $v_i,v_j$, $i\ne j$ cannot be in the same area. If the two cuts
are disjoint ($X_i \cap X_j = \emptyset$) then $v_i\in X_i$ but $v_j\not\in X_i$. Otherwise,
one contains the other ($X_i \subset X_j$), but $v_i\in X_i$ whereas $v_j\in X_j\setminus X_i$.
Moreover, the typical vertex of the graph is not in the same area as any of the other typical
vertices (because it not in any $X_i$). To cap, in total there are at least $k+1$ areas, and any
edge between two different areas must cross at least one cut in $A$. The spanning tree $F \cap T$
has to connect, in particular, all the vertices in the different areas, so it must have at least $k$
edges that connect vertices in two different areas.
\end{proof}

\begin{claim}\label{cl: profit claim}
There exists a permutation $\pi$ on $\{1,2,\dots,t-1\}$ such that
$e_{\pi(i)}\in C_i$ and
$$
\sum_{i=1}^{t-1} \left(w(e_i') - w(e_{\pi(i)})\right) = \Delta.
$$
\end{claim}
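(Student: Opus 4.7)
The plan is to combine Hall's marriage theorem, applied to the bipartite incidence structure guaranteed by Claim~\ref{K_EDGES_EXIST}, with the decomposition of $T$ and $T'$ into shared intra-component parts and the two disjoint sets of bridging edges.

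First, I set up the bipartite graph $H$ whose two sides are the cuts $\{C_1,\dots,C_{t-1}\}$ and the tree edges $\{e_1,\dots,e_{t-1}\}\subseteq F\cap T$, with an edge of $H$ joining $C_i$ to $e_j$ exactly when $e_j\in C_i$. For any subfamily $A\subseteq\{C_1,\dots,C_{t-1}\}$ of size $k$, Claim~\ref{K_EDGES_EXIST} provides $k$ distinct edges of $F\cap T$ each of which crosses at least one cut in $A$; since $|F\cap T|=t-1$, these $k$ edges lie among $\{e_1,\dots,e_{t-1}\}$, so the neighborhood of $A$ in $H$ has size at least $k$. Hall's condition is therefore satisfied, and $H$ admits a perfect matching. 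I define $\pi$ to be the permutation induced by this matching, so that by construction $e_{\pi(i)}\in C_i$ for every $i$.

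Next, I verify the numerical identity. Because $\pi$ is a permutation of $\{1,\dots,t-1\}$, the sum telescopes as
\[
\sum_{i=1}^{t-1}\bigl(w(e_i')-w(e_{\pi(i)})\bigr)=\sum_{i=1}^{t-1}w(e_i')-\sum_{i=1}^{t-1}w(e_i).
\]
By Lemma~\ref{MST_DELETED_EDGES_CLAIM}, we may choose the MST $T'$ of $G'=G\setminus F$ so that it retains every edge of $T$ lying inside some component $A_j$ of $T\setminus\{e_1,\dots,e_{t-1}\}$, and the remaining $t-1$ edges of $T'$ are exactly $e_1',\dots,e_{t-1}'$ (these are the edges of $T'_{cc}$ under our identification). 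Consequently, $T$ and $T'$ share all their edges except that $\{e_1,\dots,e_{t-1}\}\subseteq T$ is replaced by $\{e_1',\dots,e_{t-1}'\}\subseteq T'$, whence
\[
w(T')-w(T)=\sum_{i=1}^{t-1}w(e_i')-\sum_{i=1}^{t-1}w(e_i).
\]
Since $\MST(G)=w(T)$ and $\MST(G\setminus F)=w(T')$ (as $T$ and $T'$ are minimum spanning trees of their respective graphs), the right-hand side equals $\Delta$, completing the identity.

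The only genuinely nontrivial step is verifying Hall's condition, and that has already been handed to us by Claim~\ref{K_EDGES_EXIST}; the rest is a careful bookkeeping of which edges $T$ and $T'$ share versus differ on. One subtlety to flag is that the matching only guarantees $e_{\pi(i)}\in C_i$ but says nothing about how the individual differences $w(e_i')-w(e_{\pi(i)})$ compare to $w(e_i')-w(e_i)$; the point is that we do not need a pairing aligned with the ``natural'' MST-replacement order, because the permutation sum is invariant, and the containment $e_{\pi(i)}\in C_i$ is exactly the structural hook that we will exploit in the subsequent proof of Lemma~\ref{PROFIT_CLAIM} via Lemma~\ref{PROFIT_OF_CUT}.
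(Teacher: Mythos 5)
Your proof is correct and follows essentially the same route as the paper: you invoke Claim~\ref{K_EDGES_EXIST} to verify Hall's condition, extract a perfect matching between cuts and the edges of $F\cap T$, and then observe that since $\pi$ is a permutation the sum telescopes to $w(T')-w(T)=\Delta$ via Lemma~\ref{MST_DELETED_EDGES_CLAIM}. Your version spells out the bipartite construction and the telescoping step a bit more explicitly, but no new ideas are introduced.
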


\begin{proof}
By Claim~\ref{K_EDGES_EXIST} and Hall's marriage theorem we conclude that
there exists a perfect matching between the $t-1$ cuts and the $t-1$ edges of
$T \cap F$, where an edge and a cut are matched only if the edge crosses the cut.
Every cut $C_i$ is constructed using the edge $e_i'$ and a weight of $w(e_i')$.
Let $e_{\pi(i)}$ be the edge matched to $C_i$, so $e_{\pi(i)}$ crosses $C_i$.

Let $\{e_1, e_2, \ldots, e_{t-1}\} = T \cap F$, and let $e_1', e_2', \ldots, e_{t-1}'$
be the edges of $T'$ that replace the edges in $T \cap F$. The profit of $F$ is exactly
$$
\Delta = \sum_{i=1}^{t-1} w(e_i') - \sum_{i=1}^{t-1} w(e_i) =
\sum_{i=1}^{t-1} \left(w(e_i') - w(e_{\pi(i)})\right),
$$
concluding the proof.
\end{proof}

\begin{proofof}{Lemma~\ref{PROFIT_CLAIM}}
This is a corollary of Claim~\ref{cl: profit claim}. By Lemma~\ref{PROFIT_OF_CUT},
$p_G(C_i) \geq w(e_i') - w(e_{\pi(i)})$ for all $i\in \{1,2,\dots,t-1\}$.
\end{proofof}

\section{Budget Approximation}\label{sec:budget}

In this section we describe the greedy algorithm which chooses cuts with a good ratio of profit to cost.
Then we show that if there exists a solution of cost $B$ and profit $\Delta$, then the greedy algorithm
outputs a solution with profit of at least $\Delta$ and cost of $O(B \cdot \log n)$.

\begin{algorithm}\label{BUDGET_APPROX}
 \SetAlgoLined
 \LinesNumbered
 \DontPrintSemicolon
 \SetKwInOut{Input}{Input}
 \Input{$G=(V, E),\ \Delta$}

 \SetKwRepeat{Do}{do}{while}
 $\budget \gets \min_{e \in E}c(e)$\;
 $\weights \gets \bigcup_{e \in E}^{} \{w(e)\}$\;
\Do{$F = \emptyset$}{
    $F \gets \greedy(G, \budget, \Delta, \weights)$\;
    $\budget \gets 2\cdot \budget$\;
}
\Return{$F$}

\caption{Budget Approximation Algorithm}
\end{algorithm}

\begin{algorithm}\label{GREEDY_ALG}
 \SetAlgoLined
 \LinesNumbered
 \DontPrintSemicolon
 \SetKwInOut{Input}{Input}
 \Input{$G=(V, E),\ \budget,\ \Delta,\ \weights$}

 \SetKwRepeat{Do}{do}{while}
$F \gets \emptyset$, $b \gets 0$, $G'=(V,E') \gets G$\;
\Do{$r^* > 0\wedge b < (1+2\log n)\cdot \budget\wedge p_G(F) < \Delta$}{
    $R \gets []$, $r^* \gets 0$, $C^* \gets \emptyset$\;
    \For{$\{u,v\} \in E'$}{
        \For{$W \in \weights$}{
            $G'' \gets (V,\{e\in E':\ w(e) < W\})$\;
            $C \gets$ minimum $u$-$v$ cut in $G''$, $p \gets W - w(\{u,v\})$\;
            \If{$0 < c(C) \leq \budget \wedge \frac{p}{c(C)} > r^*$}{
                $r^* \gets \frac{p}{c(C)}$, $C^* \gets C$ \;
            }
        }
    }
    $G' \gets G' \setminus C^*$, $b \gets b + c(C^*)$, $F \gets F \cup C^*$\;
}
\eIf{$p_G(F)\ge\Delta$}{
    \Return{$F$}
}{
    \Return{$\emptyset$}
}

\caption{The Greedy Algorithm}
\end{algorithm}

We assume for simplicity that $\Delta > 0$ (otherwise doing nothing is a trivial solution)
and that the cost of a global minimum cut in $G$ (with respect to $c$) is more than $B$
(otherwise, removing a global minimum cut guarantees profit $= \infty\ge\Delta$).
The algorithm proceeds as follows. Start with the lowest possible budget $B$, the minimum
cost of a single edge, and search for $B$ using spiral search, doubling the guess in each
iteration. The test for $B$ is to get profit at least $\Delta$, where the cost of the solution
is restricted to be less than $(1+2\log n)\cdot B$. Clearly, if the test gives the correct answer, we will
overshoot $B$ by a factor of less than $2$, and therefore we will pay for a solution with a profit
of at least $\Delta$ a cost of less than $(2+4\log n)\cdot B = O(B\log n)$. See the pseudo-code
of Algorithm~\ref{BUDGET_APPROX}.

Now, for a guess of $B$, we repeatedly find a partial cut of cost $\le B$ with maximum
profit-to-cost ratio, and remove it, until we either fail to make progress, or exceed the
relaxed budget, or accumulate a profit of at least $\Delta$. In the latter case, we've
reached our target and can stop searching for $B$. To find the best partial cut, we
enumerate over the edges of the graph and over the possible weights of edges of
the graph. For an edge $e$ and a weight $W$, we consider the minimum cost cut
that separates the endpoints of $e$, taking into account only edges of weight less
than $W$. That is, we consider in the current graph $G'$ (after previously chosen
cuts have been removed) the cheapest cut $C_{G'}(S,W)$ with $|S\cap e| = 1$, and
we choose among those cuts for all $e,W$ a cut with the maximum profit-to-cost
ratio. See the pseudo-code of Algorithm~\ref{GREEDY_ALG}.

\begin{theorem}\label{BUDGET_APPROX_THEOREM}
Suppose that there exists a solution of cost $B$ and profit $\Delta$. If the input
budget in Algorithm~\ref{GREEDY_ALG} is at least $B$, then the output $F$
of the algorithm has $p_G(F)\ge\Delta$.
\end{theorem}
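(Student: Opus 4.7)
The plan is to fix an optimal solution $F^*$ with $c(F^*)=B$ and $p_G(F^*)=\Delta$, and to maintain, across iterations of the greedy loop, a lower bound on the best ratio $r^*$ in terms of the remaining profit gap. Concretely, the goal is to show that whenever $p_G(F)<\Delta$ at the start of an iteration, $r^*\ge(\Delta-p_G(F))/(2B\log n)$, so that the loop exits neither because $r^*=0$ nor because $b$ crosses the cap while $p_G(F)<\Delta$, forcing termination through $p_G(F)\ge\Delta$ and the return of a non-empty $F$.

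For the ratio bound, I consider the residual witness $F^*\setminus F$ inside the residual graph $G\setminus F$. The identity $p_G(F\cup F^*)=p_G(F)+p_{G\setminus F}(F^*\setminus F)$ (which follows by adding and subtracting $\MST(G\setminus F)$) together with $p_G(F\cup F^*)\ge p_G(F^*)=\Delta$ (since removing more edges only increases the MST weight) gives $p_{G\setminus F}(F^*\setminus F)\ge\Delta-p_G(F)$, while clearly $c(F^*\setminus F)\le B$. Applying Theorem~\ref{CUTS_THEOREM}, and specifically the matching of Claim~\ref{cl: profit claim}, inside $G\setminus F$ to this residual witness yields partial cuts $C_1,\ldots,C_{t-1}$ of $G\setminus F$, matched edges $e_{\pi(j)}\in C_j$, and thresholds $w(e_j')$, with $\sum_j c(C_j)\le 2B\log n$ and $\sum_j\bigl(w(e_j')-w(e_{\pi(j)})\bigr)\ge\Delta-p_G(F)$; each $C_j\subseteq F^*\setminus F$ by Claim~\ref{cl: cut in G} (applied inside $G\setminus F$), so $c(C_j)\le B\le\budget$. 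A pigeonhole then gives some $j^*$ with
\[\frac{w(e_{j^*}')-w(e_{\pi(j^*)})}{c(C_{j^*})}\;\ge\;\frac{\Delta-p_G(F)}{2B\log n}.\]
When the greedy enumeration reaches $(u,v)=e_{\pi(j^*)}$ and $W=w(e_{j^*}')$, the minimum $u$-$v$ cut it computes in the restriction of $G\setminus F$ to edges of weight below $W$ has cost at most $c(C_{j^*})\le B$ (because $C_{j^*}$ is itself a valid $u$-$v$ cut in that subgraph) and estimated profit $p=W-w(\{u,v\})$ equal to the pigeonhole numerator, so $r^*$ is at least the pigeonhole ratio, and in particular $r^*>0$; moreover Lemma~\ref{PROFIT_OF_CUT} guarantees that the actual profit gained when the chosen cut is added to $F$ is at least $r^*c^*$.

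The main obstacle is to convert this per-iteration ratio guarantee into the termination statement that the cumulative cost stays below $(1+2\log n)\budget$ before $p_G(F)$ reaches $\Delta$. My plan is a set-cover-style telescoping: setting $\Delta_i:=\Delta-p_G(F_i)$, the ratio bound gives $c^*_i/(2B\log n)\le\hat p_i/\Delta_{i-1}\le(\Delta_{i-1}-\Delta_i)/\Delta_{i-1}$ (using Lemma~\ref{PROFIT_OF_CUT} for the second inequality), so the cumulative cost through the penultimate iteration is bounded by the telescoping / integral sum $2B\log n\cdot\sum_i(\Delta_{i-1}-\Delta_i)/\Delta_{i-1}$, which by the pigeonhole argument above can be kept at most $2B\log n$ by always having the option of choosing the pigeonhole cut of cost $\le B$; the remaining slack in the cap then accommodates one last iteration of cost at most $\budget$, yielding $b\le 2B\log n+\budget\le(1+2\log n)\budget$ since $\budget\ge B$. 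The delicate step is precisely this last iteration: because the profit function is real-valued, the geometric decay alone does not drive $\Delta_i$ to zero, and the termination has to be extracted from combining the pigeonhole cut's guaranteed ratio with the $\budget$ of budget reserved by the additive slack, using Lemma~\ref{PROFIT_OF_CUT} as a genuine (not merely asymptotic) lower bound on single-cut profit to certify that the final chosen cut's actual profit gain crosses the remaining gap.
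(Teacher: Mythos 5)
Your per-iteration ratio bound is correct and is essentially Lemma~\ref{SINGLE_ITERATION_RATIO} of the paper: applying Theorem~\ref{CUTS_THEOREM}, Claim~\ref{cl: profit claim} and Claim~\ref{cl: cut in G} to the residual witness $F^*\setminus F$ inside $G\setminus F$ certifies that some enumerated pair $(e,W)$ gives a cut with estimated ratio at least $(\Delta-p_G(F))/(2B\log n)$ and cost at most $B\le\budget$, so the loop cannot exit with $r^*=0$ while $p_G(F)<\Delta$.

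The gap is in the termination argument. The telescoping bound
$\sum_i c^*_i \le 2B\log n\cdot\sum_i(\Delta_{i-1}-\Delta_i)/\Delta_{i-1}$
does not give $2B\log n$: the sum $\sum_i(\Delta_{i-1}-\Delta_i)/\Delta_{i-1}$ is not bounded by $1$; in the continuous approximation it behaves like $\ln(\Delta_0/\Delta_{\mathrm{final}})$, which is unbounded for real-valued profits. Nothing prevents the greedy from driving $\Delta_i$ down geometrically while paying $\Theta(B\log n)$ per halving, blowing past the cap $(1+2\log n)\budget$ with $p_G(F)<\Delta$ still holding. Your closing sentence tries to rescue this with one extra iteration and Lemma~\ref{PROFIT_OF_CUT}, but the only guarantee that lemma plus the ratio bound give for the next cut is actual profit at least $c(C)\cdot\frac{\Delta_i}{2B\log n}$, which is far below $\Delta_i$ whenever $c(C)\ll 2B\log n$. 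So the last step does not ``cross the remaining gap.''

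The paper's proof of Theorem~\ref{BUDGET_APPROX_THEOREM} avoids this by a different decomposition and, crucially, by a time-\emph{uniform} ratio. Each chosen cut $C_i$ is split into $C_i^F=C_i\cap F^*$ and $C_i^H=C_i\setminus F^*$. A mediant-inequality argument (from $\frac{a+b}{c+d}\ge\frac{a}{c}$ for $a,b\ge 0$, $c,d>0$, conclude $\frac{b}{d}\ge\frac{a+b}{c+d}$) shows that the ``off-optimal'' part $C_i^H$ is at least as profitable per unit cost as $C_i$ itself: with $p_i:=W_i-w(h_i)$ (where $h_i$ is the lightest edge of $C_i^H$), one gets $p_i/c(C_i^H)\ge (W_i-w(e_i'))/c(C_i)$. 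The ratio lower bound is then stated not against the moving target $\Delta-p_G(F_i)$ but against the \emph{fixed} quantity $\Delta-\delta$ where $\delta=p_G(F^*\cap S_l)$ is evaluated at the end; since $F^*\cap S_{i-1}\subseteq F^*\cap S_l$, super-modularity (Corollary~\ref{PROFIT_PRESERVATION}) gives $p_{G_i}(F^*\setminus S_{i-1})\ge\Delta-\delta$ uniformly in $i$, hence $p_i/c(C_i^H)\ge(\Delta-\delta)/(2B'\log n)$ for every $i$. Finally, $p_G(S_l)\ge\delta+c(S_l^H)\cdot(\Delta-\delta)/(2B'\log n)$, and if the loop were to exit on the budget cap then $c(S_l^H)=c(S_l)-c(F^*\cap S_l)\ge(1+2\log n)B'-B\ge 2B'\log n$, which forces $p_G(S_l)\ge\Delta$. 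You are missing both the $F^*$-vs-$H$ split and the fixed-$\delta$ trick, and without them the set-cover-style accounting does not close.
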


We begin with the analysis of a single iteration of Algorithm~\ref{GREEDY_ALG}.
\begin{lemma}\label{SINGLE_ITERATION_RATIO}
Consider an iteration of the do-loop in Algorithm~\ref{GREEDY_ALG}
(with input budget $B$ and input target profit $\Delta$). Suppose that $G'$
has a solution $F$ of cost $c(F) \le B$ and profit $p_{G'}(F)=\Delta - \delta$.
Then, this iteration computes a partial cut $C = C_{G'}(S,W)$ that satisfies,
for some $e\in C$,
\[\frac{W-w(e)}{c(C)} \geq \frac{\Delta - \delta}{2B\cdot \log n}.\]
\end{lemma}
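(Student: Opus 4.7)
The plan is to instantiate the construction of Section~\ref{sec:relax} on $G'$ and its hypothetical solution $F$, extract via averaging a single relaxation cut whose profit-to-cost ratio already meets the target bound, and then verify that the algorithm's inner enumeration inspects a cut at least as good.

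First I would apply Theorem~\ref{CUTS_THEOREM} to $G'$ with the set $F$ (cost at most $B$, profit $\Delta-\delta$), obtaining partial cuts $C_1,\ldots,C_{t-1}$ in $G'$ with $\sum_i c(C_i)\le 2B\log n$ (by Lemma~\ref{SUMS_CLAIM}). I would then reach inside that construction and invoke Claim~\ref{cl: profit claim} to produce a matching $\pi$ satisfying $e_{\pi(i)}\in C_i$ and
\[ \sum_{i=1}^{t-1}\bigl(w(e_i')-w(e_{\pi(i)})\bigr) \;=\; \Delta-\delta. \]
A straightforward averaging over the $t-1$ cuts then yields an index $i^*$ with
\[ \frac{w(e_{i^*}')-w(e_{\pi(i^*)})}{c(C_{i^*})} \;\ge\; \frac{\Delta-\delta}{2B\log n}. \]

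Next I would show that the algorithm's double loop examines a cut at least as good as $C_{i^*}$ when it considers the pair $(\{u,v\},W):=(e_{\pi(i^*)},w(e_{i^*}'))$. Because $e_{\pi(i^*)}\in C_{i^*}$, its weight is strictly less than $W$, so $e_{\pi(i^*)}$ belongs to the subgraph $G''=(V,\{e\in E':w(e)<W\})$ built on line~6. By the definition of a partial cut, $C_{i^*}\subseteq E(G'')$; moreover every edge of $G''$ that crosses $C_{G'}(X_{i^*})$ has weight $<W$ and therefore already belongs to $C_{i^*}$, so removing $C_{i^*}$ from $G''$ separates $u$ from $v$. Thus $C_{i^*}$ is a feasible $u$-$v$ cut inside $G''$. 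Crucially, Claim~\ref{cl: cut in G} guarantees $C_{i^*}\subseteq F$, which gives $c(C_{i^*})\le c(F)\le B=\budget$, so the algorithm's cost test on line~8 is passed. Hence the minimum $u$-$v$ cut $C^*$ returned on this pair satisfies $c(C^*)\le c(C_{i^*})$, and the recorded ratio is at least $(w(e_{i^*}')-w(e_{\pi(i^*)}))/c(C_{i^*})\ge (\Delta-\delta)/(2B\log n)$. Since the loop retains the pair of maximum ratio, the output cut $C$ inherits this bound, and because the winning ratio is strictly positive the iterating edge $e=\{u,v\}$ must have $w(e)<W$, placing it inside $C$.

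I expect the main obstacle to be the double-sided bookkeeping in this last step: converting the relaxation cut $C_{i^*}$ into an object that the greedy actually inspects. One has to argue simultaneously that $C_{i^*}$ survives inside the weight-restricted subgraph $G''$, that it still separates the correct pair of vertices there, and that its cost fits under the per-iteration budget so it is not discarded by the cost test. The third point is where Claim~\ref{cl: cut in G} does essential work, since without the containment $C_{i^*}\subseteq F$ there is no direct way to bound $c(C_{i^*})$ by $B$, only the weaker aggregate bound $\sum_i c(C_i)\le 2B\log n$.
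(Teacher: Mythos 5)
Your proof is correct and follows essentially the same route as the paper's: invoke the Section~\ref{sec:relax} decomposition (Lemma~\ref{SUMS_CLAIM} and Claim~\ref{cl: profit claim}) on $G'$ and $F$, average to find one cut with the target ratio, use Claim~\ref{cl: cut in G} to keep its cost under $B$, and observe that the algorithm's enumeration over the corresponding $(e,W)$ pair finds a $u$-$v$ cut of no greater cost. Your write-up is somewhat more explicit than the paper's about why $C_{i^*}$ remains a valid $u$-$v$ separator in the weight-restricted subgraph $G''$ and why the winning edge lies inside the returned cut, but the underlying argument is the same.
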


\begin{proof}
By Claim~\ref{cl: profit claim} and Lemma~\ref{SUMS_CLAIM}, there are partial
cuts $C_1, C_2, \ldots, C_{t-1}$ in $G'$, edges $e_1\in C_1$, $e_2\in C_2$, $\dots$,
$e_{t-1}\in C_{t-1}$, and edge weights $W_1,W_2,\ldots, W_{t-1}$ defining the cuts,
such that
\[\sum_{i=1}^{t-1} W_i - w(e_i) \geq \Delta,\]
and
\[\sum_{i=1}^{t-1} c(C_i) \leq 2B\cdot \log n.\]
Hence, there exists a cut $C_i$ with a ratio
\[\frac{W_i-w(e_i)}{c(C_i)} \geq \frac{\Delta-\delta}{2B\cdot \log n}.\]
Moreover, by Claim~\ref{cl: cut in G}, $C_i\subset F$ and hence $c(C_i) \leq B$.

The algorithm iterates over all the edges and over all the weights. Consider the
iteration that uses the weight $W_i$ and the edge $e_i$. Let $C'$ be the cut that
the algorithm finds in this iteration.

Notice that $C'$ is a minimum cost cut separating the endpoints of $e_i$ in the
subgraph of $G'$ consisting of edges of weight $< W_i$. Therefore,
$c(C')\le c(C_i)$. As $e_i$ crosses $C'$ and the weight defining $C'$ is
$W_i\ge w(e_i)$, we get that
\[\frac{W_i-w(e_i)}{c(C')} \ge \frac{W_i-w(e_i)}{c(C_i)} \geq \frac{\Delta-\delta}{2B\cdot \log n}.\]
The algorithm's choice of $e,W,C$ maximizes the expression $\frac{W-w(e)}{c(C)}$, hence
the lemma follows.
\end{proof}

\begin{proofof}{Theorem~\ref{BUDGET_APPROX_THEOREM}}
Let $F^*$ be the promised solution with cost $c(F^*) = B$ and profit $p_G(F^*) = \Delta$.
Let $B'\ge B$ denote the budget that Algorithm~\ref{GREEDY_ALG} gets as input.
Let $C_1 = C_G(X_1,W_1)$, $C_2 = C_{G\setminus C_1}(X_2,W_2)$, $\ldots$,
$C_l = C_{G\setminus (C_1\cup C_2\cup \cdots \cup C_{l-1})}(X_l,W_l)$ be the
sequence of partial cuts that the algorithm chooses. Clearly,
$c(\bigcup_{i=1}^{l}C_i) < (2+2\log n)\cdot B'$, on account of the stopping
conditions of the do-loop (the last iteration started with total cost below
$(1+2\log n)\cdot B'$). We need to show that $p_G(\bigcup_{i=1}^{l}C_i) \ge \Delta$.
This is clearly the case if the do-loop terminates because $p_G(F) \ge \Delta$, so
we need to exclude the other termination conditions. Note that by
Lemma~\ref{SINGLE_ITERATION_RATIO}, if $p_G(F) < \Delta$, then
$r^* > 0$, hence we only need to show that $b$ does not reach or exceed
$(1+2\log n)\cdot\budget$ before $p_G(F)$ reaches or exceeds $\Delta$.

For every $i\in\{1,2,\dots,l-1\}$, denote $C_i^F = C_i\cap F^*$ and $C_i^H = C_i\setminus F^*$.
Also put $S_i = \bigcup_{j=1}^{i} C_j$, $S_i^F = \bigcup_{j=1}^{i} C_j^F$, and
$S_i^H = \bigcup_{j=1}^{i} C_j^H$.
Let $h_i$ be a minimum weight edge in $C_i^H$ and $e_i'$ be a minimum weight edge
in $C_i$. Put $p_i = W_i - w(h_i)$. Consider $G_i = G \setminus S_{i-1}$, namely the graph
just before the algorithm chooses $C_i$. Notice that the criterion for choosing $C_i$
involves the ratio
\[\frac{W_i-w(e_i')}{c(C_i)} = \frac{w(h_i)-w(e_i') + p_i}{c(C_i^F) + c(C_i^H)}.\]

Consider the partial cut $C_i' = \{e \in C_i | w(e) < w(h_i)\}$. As no edge of $C_i^H$ has weight
less than $w(h_i)$, it must be that $C_i'\subseteq C_i^F$. Therefore, $c(C_i')\le c(C_i^F)$, and
hence $\frac{w(h_i) - w(e_i')}{c(C_i')} \ge \frac{w(h_i)-w(e_i')}{c(C_i^F)}$.
However, it must be that $\frac{w(h_i) - w(e_i')}{c(C_i')} \le \frac{W_i-w(e_i')}{c(C_i)}$, otherwise
the algorithm would choose $C_i'$ instead of $C_i$. Therefore, we have that
\[\frac{W_i-w(e_i')}{c(C_i)} = \frac{w(h_i)-w(e_i') + p_i}{c(C_i^F) + c(C_i^H)} \ge \frac{w(h_i)-w(e_i')}{c(C_i^F)}.\]
This implies that
\[\frac{p_i}{c(C_i^H)} > \frac{w(h_i)-w(e_i') + p_i}{c(C_i^F) + c(C_i^H)} = \frac{W_i-w(e_i')}{c(C_i)}\]
(as $\frac{a+b}{c+d} \ge \frac{a}{c}\implies \frac{b}{d} \ge \frac{a + b}{c + d}$ for $a,b\ge 0$ and $c,d > 0$).

Denote $F' = F^* \cap S_l$. We now show that $p_G(S_l) \ge p_G(F') + \sum_{i=1}^{l}p_i$. To
estimate $p_G(S_l)$, notice that $G \setminus S_l = G \setminus F' \setminus (S_l^H)$. Therefore,
we can lower bound the profit by summing over $i=1,2,\dots,l$ the profit of removing $C_i^H$
from $G\setminus F'\setminus S_{i-1}^H$. Let
$G_i' = G \setminus F' \setminus (S_{i-1}^h) = G \setminus F' \setminus (S_{i-1})$. This graph is
$G_i$ minus some edges from $F^*$. As $h_i \in C_i^H$, it must be that $h_i \notin F^*$ so $h_i$
is in $G_i'$. Now, $C_i^H$ includes the edge $h_i$ of weight $w(h_i)$. Furthermore, none of the
edges of $C_i$ are in $G_{i+1}'$, so $C_i^H = C_{G_i'}(X_i,W')$, for $W'\ge W_i$.
Therefore, by Lemma~\ref{PROFIT_OF_CUT}, $p_{G_i'}(C_i^H) \ge W_i - w(h_i)$. We conclude
that
\begin{equation}\label{eq: pG(Sl)}
p_G(S_l) = p_G(F')+ \sum_{i=1}^{l} p_{G_i'}(C_i^H)\ge p_G(F') + \sum_{i=1}^{l}p_i.
\end{equation}

Put $\delta = p_G(F') = p_G(F^* \cap S_l)$. By Corollary~\ref{PROFIT_PRESERVATION},
as $G_i = G \setminus S_{i-1}$ and $F^* \cap S_{i-1}\subseteq S_{i-1}$, we have
\begin{equation}\label{eq: p_{G_i}(F* setminus Si-1)}
p_{G_i}(F^* \setminus S_{i-1}) \ge p_{G \setminus (F^* \cap S_{i-1})}(F^* \setminus S_{i-1}).
\end{equation}
Write
\begin{equation}\label{eq: pG(F*)}
p_G(F^*) = p_G(F^* \cap S_{i-1}) + p_{G \setminus (F^* \cap S_{i-1})}(F^* \setminus S_{i-1}).
\end{equation}
Combining Equations~\eqref{eq: p_{G_i}(F* setminus Si-1)} and~\eqref{eq: pG(F*)}, we get
$$
p_{G_i}(F^* \setminus S_{i-1}) \ge p_{G \setminus (F^* \cap S_{i-1})}(F^* \setminus S_{i-1}) =
p_G(F^*) - p_G(F^* \cap S_{i-1}) \ge p_G(F^*) - p_G(F^* \cap S_l) = \Delta - \delta,
$$
where the last inequality follows from the fact that $F^* \cap S_{i-1}\subseteq F^* \cap S_l$.

By Lemma~\ref{SINGLE_ITERATION_RATIO}, as there exists in $G_i$ a solution
$F^* \setminus S_{i-1}$ with profit $p_{G_i}(F^* \setminus S_{i-1})\ge \Delta - \delta$
and cost $c(F^* \setminus S_{i-1})\le B\le B'$, it holds that there exists a partial cut $C=C_{G_i}(S,W)$
and an edge $e\in C$, such that $\frac{W-w(e)}{c(C)}\ge \frac{\Delta - \delta}{2B'\log n}$.
In particular, $C_i$ must satisfy this inequality, as it maximizes the left-hand side. Therefore,
\[\frac{p_i}{c(C_i^H)} \ge \frac{W_i-w(e_i')}{c(C_i)} \ge \frac{\Delta - \delta}{2B' \log n}.\]
So, $p_i \ge c(C_i^H)\cdot \frac{\Delta - \delta}{2B' \log n}$. Plugging this into Equation~\eqref{eq: pG(Sl)},
we get that
\begin{equation}\label{eq: final profit bound}
p_G(S_l) \ge p_G(F') + \sum_{i=1}^{l}p_i \ge \delta + \sum_{i=1}^{l}c(C_i^H)\cdot \frac{\Delta - \delta}{2B' \log n} =
\delta + c(S_l^H)\cdot \frac{\Delta - \delta}{2B' \log n}.
\end{equation}
Now, we assumed that the do-loop does not terminate because $p_G(F) \ge \Delta$, so it must
have terminated because $b \ge (1+2\log n)\cdot B'$. Therefore,
$c(S_l^H) = c(S_l) - c(F') \ge c(S_l) - c(F^*)\ge B' + 2B' \log n - B\ge 2B'\log n$, hence
the right-hand side of Equation~\eqref{eq: final profit bound} is at least $\Delta$.
\end{proofof}

\paragraph{Running time.}
Recall that $\tau(n,m)$ denotes the time complexity of computing a minimum $s$-$t$ cut, where
$n$ is the number of nodes of the network and $m$ is the number of edges of the
network. Let $d=|\weights|$ denote the number of different edge weights (notice that $d\le m$).
The doubling search for the right budget adds a factor of $O\left(\log\frac{B}{c_{min}}\right)$.
Each iteration of the do-loop in Algorithm~\ref{GREEDY_ALG} iterates over all the edges
and the weights, and executes one minimum $s$-$t$ cut computation, so the time
complexity of a do-loop iteration is $O(\tau(n,m) \cdot dm)$. In each such iteration
we remove at least one edge, so there are no more than $m$ iterations of the do-loop.
Therefore, the total running time of the algorithm is
\[O\left(\tau(n,m) \cdot dm^2\log\frac{B}{c_{min}}\right).\]

It is possible to reduce the factor of $\log\frac{B}{c_{min}}$ to $\log m$ by
reducing the range of the search for $B$ as follows. With a budget of $B$,
we cannot remove edges of cost $> B$. Therefore,
$$
b^* = \arg\min\{b:\ \MST(G\setminus\{e\in E:\ c(e)\le b\})\ge\MST(G)+\Delta\}
$$
is a lower bound on $B$. On the other hand, by removing all the edges of cost
at most $b^*$ we definitely gain $\Delta$. There are at most $m$ such edges,
so $mb^*$ is an upper bound on $B$.

It is possible to improve the running time to
\[O\left(\tau(n,m) \cdot dm\log m\right)\]
using a more clever implementation as follows. Firstly, we calculate the cuts for each weight
and edge only in the first iteration. In the following iterations we can use the same set of cuts,
ignoring the cuts that were created using the edges that we already removed. We need to
show that a version of claim~\ref{SINGLE_ITERATION_RATIO} holds for this faster algorithm.
\begin{claim}
Consider an iteration of the do-loop of Algorithm~\ref{GREEDY_ALG}, assuming that
the input $\budget\ge B$. Let $\delta = \MST(G') - \MST(G)$. Consider the cuts
computed during the first do-loop iteration (i.e., partial cuts in $G$), in an iteration of
the nested for-loop with $e\in E\setminus F$ and $W\in\weights$. Let $C$ be such a
cut with the best ratio $\frac{W-w(e)}{c(C)}$ (in $G$). Then,
\[\frac{W-w(e)}{c(C\cap E')} \geq \frac{\Delta - \delta}{2B\cdot \log n}.\]
\end{claim}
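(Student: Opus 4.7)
The plan is to mirror the proof of Lemma~\ref{SINGLE_ITERATION_RATIO}, adapting it to handle the precomputed minimum cuts in $G$ (rather than in the current graph $G'$). By Corollary~\ref{PROFIT_PRESERVATION}, as invoked in the proof of Theorem~\ref{BUDGET_APPROX_THEOREM}, the set $F^* \setminus F$ is a solution in $G'$ of cost at most $B$ and profit at least $\Delta - \delta$. Applying Theorem~\ref{CUTS_THEOREM} to $G'$ with this solution yields partial cuts $C_i^{G'} = C_{G'}(X_i, W_i)$ in $G'$, $i = 1, \ldots, t-1$, with $\sum_i c(C_i^{G'}) \le 2B \log n$, matched (via Claim~\ref{cl: profit claim}) to MST edges $e_i^* \in T_{G'} \cap (F^* \setminus F) \subseteq E'$ satisfying $\sum_i (W_i - w(e_i^*)) \ge \Delta - \delta$. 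By averaging, some index $i_0$ satisfies $\frac{W_{i_0} - w(e_{i_0}^*)}{c(C_{i_0}^{G'})} \ge \frac{\Delta - \delta}{2B \log n}$.

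Since $e_{i_0}^* \in E'$ and $W_{i_0} \in \weights$, the pair $(e_{i_0}^*, W_{i_0})$ is among those considered by the speedy algorithm; let $\hat{C}$ be its corresponding precomputed minimum $u$-$v$ cut in $G_{<W_{i_0}}$, where $\{u, v\} = e_{i_0}^*$. Observe that $C_G(X_{i_0}, W_{i_0})$ is itself a $u$-$v$ cut in $G_{<W_{i_0}}$ and satisfies $C_G(X_{i_0}, W_{i_0}) \cap E' = C_{G'}(X_{i_0}, W_{i_0}) = C_{i_0}^{G'}$, so by the minimality of $\hat{C}$ we have $c(\hat{C}) \le c(C_G(X_{i_0}, W_{i_0})) = c(C_{i_0}^{G'}) + c(D_{i_0})$, where $D_{i_0}$ collects the $F$-edges of weight less than $W_{i_0}$ crossing $X_{i_0}$.

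The crucial step, and the main obstacle, is to show $c(\hat{C} \cap E') \le c(C_{i_0}^{G'})$. I expect this to follow from a min-cut exchange argument: any $F$-edges included in $\hat{C}$ act as cheaper substitutes for $E'$-edges that an $E'$-only minimum cut would have used, so the $E'$-portion of the minimum cut in $G$ cannot cost more than the reference cut $C_{i_0}^{G'}$. Once this bound is in place, the witness pair $(e_{i_0}^*, W_{i_0}, \hat{C})$ has effective ratio $\frac{W_{i_0} - w(e_{i_0}^*)}{c(\hat{C} \cap E')} \ge \frac{\Delta - \delta}{2B \log n}$; combining with $c(C \cap E') \le c(C)$ (so that the effective ratio dominates the $G$-ratio), the cut $C$ selected by the algorithm inherits the target bound. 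Carrying out the min-cut exchange step precisely---likely via an uncrossing argument on minimum cuts or a careful max-flow decomposition---is the crux of the proof.
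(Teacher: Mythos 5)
Your plan diverges from the paper's at one decisive point: you apply Theorem~\ref{CUTS_THEOREM} to $G' = G\setminus F$, whereas the paper applies it to the intermediate graph $G'' = G\setminus(F\cap F^*)$. This choice is what creates the obstacle you correctly flag as the crux, and that obstacle is not just hard to overcome by an uncrossing or exchange argument---the inequality $c(\hat{C}\cap E') \le c(C_{i_0}^{G'})$ is false in general. The cut $\hat{C}$ is chosen to minimize \emph{total} cost in $G$, so when $F$ contains cheap edges near the endpoints of $e_{i_0}^*$ it is perfectly willing to spend more on $E'$-edges than a cut computed inside $G'$ would; equivalently, $c(\hat{C}\cap E')$ is the cost of \emph{some} $u$-$v$ cut in $G'$, hence is bounded \emph{below} by the minimum $u$-$v$ cut cost in $G'$, and nothing forces $C_{i_0}^{G'}$ to be more expensive than that minimum. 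Your own intermediate bound $c(\hat{C})\le c(C_{i_0}^{G'})+c(D_{i_0})$ cannot be rescued either, since $D_{i_0}$ consists of $F$-edges that may lie in $F\setminus F^*$ and carry no budget control at all.

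The paper's route avoids this entirely. Working in $G'' = G\setminus(F\cap F^*)$, the constructed partial cuts $C''_i = C_{G''}(S_i,W_i)$ lie inside $F'' = F^*\setminus F$, and the corresponding cuts $C_i = C_G(S_i,W_i)$ in $G$ (same vertex sets $S_i$ and thresholds $W_i$) differ from $C''_i$ only by edges of $G\setminus G'' = F\cap F^*$. Thus $\bigcup_i C_i\subseteq F^*$, and since Claim~\ref{cl: num cuts} controls crossings purely in terms of the laminar structure of the $S_i$'s (so it applies to every edge of $G$), one gets $\sum_i c(C_i)\le 2\log n\cdot c(F^*)=2B\log n$ directly. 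Averaging gives an index $i$ with $\frac{W_i-w(e_i)}{c(C_i)}\ge\frac{\Delta-\delta}{2B\log n}$; the precomputed minimum cut $C$ in $G$ for $(e_i,W_i)$ then satisfies $c(C)\le c(C_i)$; and the trivial $c(C\cap E')\le c(C)$ finishes. In short, the fix is to base the decomposition on $G''$ so that lifting cuts back to $G$ only adds edges already charged to $F^*$, rather than trying to relate the $E'$-restricted cost of a $G$-minimum cut to a cut computed in $G'$.
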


\begin{proof}
Let $F^*$ denote an optimal solution with a budget $B$, yielding an increase $\Delta$
in the weight of a minimum spanning tree. Consider the ``intermediate" graph
$G'' = G\setminus (F\cap F^*)$. Notice that $\MST(G'') - \MST(G)\le\delta$, so
$F'' = F^*\setminus F$ is a solution in $G''$ that costs less than $B$ and gains at least
$\Delta-\delta$. The same bounds on cost and gain holds also in $G'$.
By Theorem~\ref{CUTS_THEOREM}, there exist partial cuts $C''_1, \ldots C''_{t-1}$, in
$G''$, $C''_i = C_{G''}(S_i,W_i)$ for all $i$, such that the following inequalities hold.
$\sum_i c(C''_i)\le 2 \cdot B \cdot \log n$, and $p_{G''}(\cup_i C''_i)\ge \Delta-\delta$.
Moreover, by Claim~\ref{cl: cut in G}, $\cup_i C''_i\subset F''$, and by
Claim~\ref{cl: profit claim}, there are edges $e_i\in C''_i$, for all $i$ such that
$\sum_{i=1}^{t-1} (W_i - w(e_i))\ge\Delta-\delta$. Consider the cuts $C_i = C_G(S_i,W_i)$
in $G$, for all $i$. The latter inequality clearly holds. It is also true that
$\sum_i c(C_i)\le 2B \cdot \log n$, because
$\cup_i (C_i\setminus C''_i)\subset F^*\setminus F''$, $c(F^*) = B$, and by
Claim~\ref{cl: num cuts}, every edge is contained in at most $2\log n$ cuts.
Therefore, there exists $i$ for which
$\frac{W_i - w(e_i)}{c(C_i)}\ge \frac{\Delta - \delta}{2B\cdot \log n}$.
The cut $C$ computed in the first iteration of the do-loop (for $G$) for the
choice $e_i$ and $W_i$ has $c(C)\le c(C_i)$, hence
$\frac{W_i - w(e_i)}{c(C)}\ge \frac{\Delta - \delta}{2B\cdot \log n}$. As
$e_i\not\in F$, we consider $C$ in the iteration for $G'$. As
$c(C\cap E')\le c(C)$, we have
$\frac{W_i - w(e_i)}{c(C\cap E')}\ge \frac{\Delta - \delta}{2B\cdot \log n}$,
as claimed.
\end{proof}

The rest of the proof is the same, so the faster algorithm keeps the approximation guarantees.
Algorithm~\ref{GREEDY_ALG} uses $dm$ flow calculations in the first iteration of the do-loop.
Subsequence iterations do not require additional flow calculations, only enumeration over at
most $dm$ cuts computed in the first iteration. Therefore, as $\tau(n,m) = \Omega(m)$, we get
that the running time is
\[O((\tau(n,m) \cdot dm + dm^2)\log m) = O(\tau(n,m) \cdot dm\log m).\]

\section{Profit Approximation}\label{sec:profit}

In this section we discuss the profit problem.

\subsection{Profit approximation algorithm}\label{sec:profit approx}

It is possible to use our methods to achieve $O(\log n)$-approximation to the profit given a strict budget
$B$, a problem considered previously in~\cite{FS96,BTV11,Zen15,LS17}. In comparison to previous
work, our results approximate the {\em profit} (i.e., the increase in minimum spanning tree weight) of
interdiction, rather than the total weight of the final minimum spanning tree. Clearly, these results are
incomparable to the claims of previous work. We demonstrate in the following subsection that the algorithm
in~\cite{LS17} does not provide any approximation guarantee for the increase in weight rather than the total 
weight.

The algorithm in this case 
is based on Algorithm~\ref{BUDGET_APPROX}, with the following changes.
Firstly, in each iteration take a cut with the best ratio among the cuts that do not cause the cost to exceed $B$.
Stop when there are no cuts we can take without exceeding the budget. Secondly, take the best solution
between this option and taking just one cut in $G$ that has maximum profit subject to the budget constraint.
Notice that this algorithm resembles the greedy approach to approximating the knapsack problem.

\begin{algorithm}\label{PROFIT_APPROX}
 \SetAlgoLined
 \LinesNumbered
 \SetKwInOut{Input}{Input}
 \Input{$G=(V, E)$, $B$, $\Delta$, $\weights$}

 \SetKwRepeat{Do}{do}{while}

$R \gets []$, $p_0 \gets 0$, $C_0 \gets \emptyset$\;
    \For{$\{u,v\} \in E$}{
        \For{$W \in \weights$}{
            $G'' \gets (V,\{e\in E:\ w(e) < W\})$\;
            $C \gets$ minimum $u$-$v$ cut in $G''$\;
            \If{$c(C) \leq B$ and $p_{G}(C) > p_0$}{
                $p_0 \gets p_{G}(C)$, $C_0 \gets C$\;
            }
        }
    }
$G'=(V,E') \gets G$, $F \gets \emptyset$, $b \gets 0$\;
\Do{$r^* > 0$}{
    $R \gets []$, $r^* \gets 0$, $C^* \gets \emptyset$\;
    \For{$\{u,v\} \in E'$}{
        \For{$W \in \weights$}{
            $G'' \gets (V,\{e\in E':\ w(e) < W\})$\;
            $C \gets$ minimum $u$-$v$ cut in $G''$, $p \gets W - w(\{u,v\})$\;
            \If{$b < b + c(C) \leq B\wedge \frac{p}{c(C)} > r^*$}{
                $r^* \gets \frac{p}{c(C)}$, $C^* \gets C$ \;
            }
        }
    }
    $G' \gets G' \setminus C^*$, $b \gets b + c(C^*)$, $F \gets F \cup C^*$\;
}
\eIf{$p_{G}(C_0) \geq p_{G}(F)$}{
    \Return{$C_0$}
}{
    \Return{$F$}
}

\caption{Profit Approximation Algorithm}
\end{algorithm}

\begin{theorem}
If there exist a solution $F$ with profit $\Delta$ and cost $B$, then Algorithm~\ref{PROFIT_APPROX}
computes a solution that costs at most $B$ and gives a profit of at least
\[\frac{\Delta}{4} \cdot \left(\frac{1}{\log n} - \frac{1}{\log^2 n}\right) = \Omega\left(\frac{\Delta}{\log n}\right).\]
\end{theorem}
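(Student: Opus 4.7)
The plan is to adapt the proof of Theorem~\ref{BUDGET_APPROX_THEOREM}, combining its per-iteration density bound with a knapsack-style comparison between the greedy branch output $F$ and the single-cut branch $C_0$. I begin by applying Theorem~\ref{CUTS_THEOREM} to the promised optimum $F^*$ in $G$, obtaining partial cuts $C_1,\ldots,C_{t-1}$ with $\sum_i c(C_i)\le 2B\log n$ and, via Claim~\ref{cl: profit claim}, a matching $\pi$ satisfying $e_{\pi(i)}\in C_i$ and $\sum_i(W_i-w(e_{\pi(i)}))\ge\Delta$. Since $C_i\subseteq F^*$ by Claim~\ref{cl: cut in G}, every $c(C_i)\le B$, so each of these cuts is a legal single-cut candidate and also a legal first-iteration greedy candidate.

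For the $C_0$ branch, for each pair $(e_{\pi(i)},W_i)$ the algorithm's first loop computes the minimum $e_{\pi(i)}$-cut in the subgraph of weight-$<W_i$ edges; this minimum cut has cost at most $c(C_i)\le B$ and, by Lemma~\ref{PROFIT_OF_CUT}, profit in $G$ at least $W_i-w(e_{\pi(i)})$. Hence $p_0\ge\max_i(W_i-w(e_{\pi(i)}))$. For the greedy branch, the identity derived from Equation~\eqref{eq: final profit bound},
\[p_G(F)\ge\delta+c(F\setminus F^*)\cdot\frac{\Delta-\delta}{2B\log n},\qquad \delta:=p_G(F^*\cap F),\]
carries over from the budget analysis at each iteration whose chosen cut realizes the density bound of Lemma~\ref{SINGLE_ITERATION_RATIO}. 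When this fails because the would-be-chosen cut exceeds the residual budget, the algorithm either terminates or picks a strictly worse cut; in either case, the bypassed cut (call it $C^{\mathrm{next}}$) has cost at most $B$, so by the same Lemma~\ref{PROFIT_OF_CUT} argument used for $C_0$ it is a candidate for the single-cut phase and its profit in $G$ is bounded by $p_0$ (possibly after invoking Corollary~\ref{PROFIT_PRESERVATION} to pass between $G$ and the residual graph).

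To close the argument I apply a knapsack fact to the items $(C_i,W_i-w(e_{\pi(i)}),c(C_i))$ of total cost at most $2B\log n$ and total amortized profit at least $\Delta$: density-weighted averaging shows that the fractional knapsack optimum with capacity $B$ is at least $\Delta/(2\log n)$, so the integral greedy profit on these items plus the profit of the single densest item is at least $\Delta/(2\log n)$. I then case-split on whether $p_0\ge\Delta/(4\log n)$. In the easy case the algorithm's output is already at least $p_0$ and we are done. In the other case every individual amortized profit $W_i-w(e_{\pi(i)})$ is below $\Delta/(4\log n)$, which forces the greedy branch to accumulate a profit of at least $\Delta/(4\log n)$ minus the profit of the single ``missed'' cut at termination; that missed cut is charged to $p_0$ and produces the factor $(1-1/\log n)$ in the stated bound.

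The main obstacle I anticipate is controlling the algorithm's greedy when the budget cap forces it to bypass the unconstrained densest cut, because the clean per-iteration density bound of Lemma~\ref{SINGLE_ITERATION_RATIO} no longer applies verbatim; the remedy is to observe that any such bypassed cut, being a single partial cut of cost at most $B$, is already captured by $p_0$ via the enumeration of the $C_0$ phase, so the ``missing density'' is never lost. A secondary subtlety is that $p_0$ is defined with respect to $G$ whereas the residual density bound is naturally phrased in $G\setminus F$, but Lemma~\ref{PROFIT_OF_CUT} gives the same weight-difference lower bound on profit in either graph, which is enough to transfer the estimate.
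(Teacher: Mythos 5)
Your high-level plan is sound and closely mirrors the paper's argument: apply Theorem~\ref{CUTS_THEOREM} to the optimum, observe that the $C_0$ branch captures every single $(e_{\pi(i)},W_i)$ candidate, use the density bound from Lemma~\ref{SINGLE_ITERATION_RATIO} to drive the greedy branch, and charge the ``missed'' cut at termination to $p_0$. However, there is a genuine gap in the cost-transfer step, which you partially notice as a ``secondary subtlety'' but do not actually resolve. You correctly observe that Lemma~\ref{PROFIT_OF_CUT} lets you transfer the \emph{profit} lower bound $W-w(e)$ from the residual graph $G'$ back to $G$, but the $C_0$ phase only records a cut if its cost in $G$ is at most $B$. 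The cut $C_{G'}(S,W)$ that realizes the density bound has cost at most $B$ \emph{in $G'$}, not in $G$: the corresponding cut $C_G(S,W)$ also contains those previously-removed greedy edges that cross $S$, so its cost in $G$ could be as large as $B + c(H)$. Since the first-phase enumeration for the pair $(e,W)$ takes the minimum $u$-$v$ cut in $G$, the cost bound you need does not follow. The paper avoids this by routing the argument through the intermediate graph $\bar{G} = G\setminus(H\cap F^*)$, from which only the optimal solution's edges have been deleted: there the dense cut $\bar C$ has $c(\bar C)\le B-b$ with $b=c(H\cap F^*)$, and lifting back to $G$ only adds back edges of $H\cap F^*$ of total cost $b$, giving $c(C_G(S,W))\le (B-b)+b=B$. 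Without that detour your claim that the bypassed cut ``is already captured by $p_0$'' does not hold.

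A second, smaller issue is that your knapsack reframing over the items $(C_i, W_i - w(e_{\pi(i)}), c(C_i))$ does not directly describe what the algorithm does: the greedy phase never selects from this list, it selects the maximum-density partial cut in the current residual graph, which need not coincide with any $C_i$ once some edges have been removed. That is precisely what the re-derivation of the density bound in the residual graph is for, and the paper's clean two-case dichotomy (either a dense cut of cost $\le B/2$ exists in $G'$, or a cut of cost $>B/2$ exists whose profit alone is already $\ge (\Delta-\delta)/(4\log n)$ and which is certified by $p_0$) gives a self-contained way to close the argument without appealing back to the budget theorem's machinery. Your constant bookkeeping (where the $(1-1/\log n)$ factor comes from) also doesn't match the actual derivation, which subtracts the already-accrued $\delta$ rather than the profit of the missed cut.
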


\begin{proof}
We will refer to the loop that computes $p_0$ and $C_0$ as the first phase of the algorithm, and
to the other loop at the second phase of the algorithm. Assume that in the second phase the algorithm
already chose to remove the edges $H \subseteq V$ and increased the minimum spanning tree weight
by $p_G(H) = \delta$. Let $G' = (V,E') = G \setminus H$. We show that if $c(H) \leq \frac{B}{2}$, then at
least one of the following cases is true.
\begin{enumerate}
    \item\label{it: cheap cut} In the next iteration the algorithm enumerates over a cut $C'$ with a ratio
    of at least $r_{G'}(C') \geq \frac{\Delta-\delta}{2B\cdot \log n}$ and cost $c(C') \leq \frac{B}{2}$.

    \item\label{it: profitable cut} In the first phase, the algorithm enumerates over a cut $C$ in the original
    graph $G$ with a profit of $p_{G}(C) \geq \frac{\Delta-\delta}{4\cdot \log n}$ and cost $c(C) \leq B$.
\end{enumerate}

Denote
$\bar{G} = G \setminus (H \cap F)$ and $c(H\cap F) = b$. Consider the set of edges $F \setminus H$.
Clearly,
$$
c(F \setminus H) = c(F) - c(H\cap F) = B-b.
$$
Moreover, $p_G(F) = \Delta$, whereas $p_G(H \cap F) \leq p_G(H) = \delta$, so $p_{\bar{G}}(F \setminus H)\ge \Delta - \delta$.
Using Claim~\ref{cl: profit claim} and the first assertion of Theorem~\ref{CUTS_THEOREM}, there exists
a partial cut $\bar{C} = C_{\bar{G}}(S,W)\subset F \setminus H$ in $\bar{G}$ and an edge $e\in \bar{C}$,
such that $c(\bar{C})\le B-b$ and
$$
\frac{W - w(e)}{c(\bar{C})} \ge \frac{\Delta-\delta}{2(B-b)\cdot \log n}.
$$

Now, consider the same partition in $G'$, i.e., the partial cut $C' = C_{G'}(S,W)$. As $G' = \bar{G}\setminus (H\setminus F)$,
we have that $c(C')\le c(\bar{C})\le B-b$. Moreover, as $e\in F\setminus H$, also $e\in E'$. Hence, by Lemma~\ref{PROFIT_OF_CUT},
$p_{G'}(C')\ge W - w(e)$. Therefore,
$$
r_{G'}(C') = \frac{p_{G'}(C')}{c(C')}\geq \frac{W - w(e)}{c(C')}\ge \frac{W - w(e)}{c(\bar{C})}\ge \frac{\Delta-\delta}{2(B-b)\cdot \log n},
$$
where the first inequality uses Lemma~\ref{PROFIT_OF_CUT}.

If $c(C') \leq \frac{B}{2}$ then Case~\ref{it: cheap cut} holds.
Otherwise, $c(C') > \frac{B}{2}$ and
$$
W - w(e) =  c(C')\cdot \frac{W - w(e)}{c(C')} > \frac{B}{2}\cdot \frac{\Delta-\delta}{2(B-b)\cdot \log n} \geq \frac{\Delta-\delta}{4 \log n}.
$$
Therefore, using Lemma~\ref{PROFIT_OF_CUT} again,
$$
p_{G}(C) \geq W - w(e)\ge \frac{\Delta-\delta}{4 \log n}.
$$
Moreover, $c(C)\le c(\bar{C}) + c(H \cap F)\le B$, so Case~\ref{it: profitable cut} holds.

Consider the second phase of the algorithm, and the first iteration that begins with
Case~\ref{it: cheap cut} not holding. If the current profit $\delta\ge \frac{\Delta}{\log n}$,
we are done. Otherwise, if the current total cost $b > \frac{B}{2}$, then the following
holds. All previous iterations started with $b\le \frac{B}{2}$, hence each added to
the solution a partial cut with profit to cost ratio of at least $\frac{\Delta-\Delta/\log n}{B\cdot \log n}$.
So the total profit is at least
$$
\frac{B}{2}\cdot \frac{\Delta-\Delta/\log n}{B\cdot \log n} = \frac{\Delta}{2}\cdot\left(\frac{1}{\log n} - \frac{1}{\log^2 n}\right).
$$
The remaining case is that $\delta < \frac{\Delta}{\log n}$, $b\le\frac{B}{2}$, and Case~\ref{it: cheap cut}
does not hold. But then Case~\ref{it: profitable cut} must hold. Hence,
$$
p_0\ge \frac{\Delta}{4}\cdot \left(\frac{1}{\log n} - \frac{1}{\log^2 n}\right),
$$
thus completing the proof.
\end{proof}

Notice that in the uniform removal costs case it holds that $t \le B$, and therefore the same proof shows
a profit of $\Omega\left(\frac{\Delta}{\log B}\right)$.

\paragraph{Running time.}
The analysis is very similar to that of the budget approximation algorithm.
Each of at most $m$ do-loop iterations iterates over edges and weights
$O(dm)$ times, where $d$ denotes the number of different edge
weights. Each internal iteration computes a minimum $s$-$t$ cut, in time
$\tau(n,m)$. Thus, the total running time is $O(\tau(n,m) \cdot dm^2)$.
With the same modification of calculating the cuts only in the first iteration,
it is possible to achieve the same asymptotic approximation guarantees
while improving the time complexity to $O(\tau(n,m) \cdot dm)$.

\subsection{Bad example for previous algorithms}\label{sec:profit_example}

When the optimal increase is small relative to the weight of the initial minimum spanning tree, our approximation guarantees are stronger than the constant factor approximations of the final tree weight. In order to demonstrate that this actually happens with previous algorithms, we analyze an instance that is motivated by the NP-hardness reduction for spanning tree interdiction in~\cite{FS96}. The constant approximation convex optimization-based algorithms, such as~\cite{Zen15,LS17}, fail to give any non-trivial solution for this example.

Let $G_H = (V_H, E_H)$ be an instance of the maximum components problem defined in~\cite{FS96} for that maximum number of connected components that can be created by removing $B$ edges from $G_H$ is $b$. Construct a graph $G = (V, E)$ by adding to $G_H$ four new vertices, as follows. Set $V=V_H \cup \{t_1, t_2, t_3, t_4\}$, $E=E_H \cup \{(u,v_1) | u \in V_H\} \cup E_T$, where $E_T$ are the edges between the new vertices as explained later. Assign weights $w=0$ and removal costs $r=1$ to all edges in $E_H$, and $w=1, r=\infty$ (where $\infty$ is some constant above $B+1$) to the edges between $G_H$ and $v_1$. The edges in $E_T$ are as follows: $(v_1, v_2)$ with $w=0, r=\infty$, $(v_1, v_3)$ with $w=W, r=\infty$, $(v_2, v_3)$ with $w=0, r=B+1$, $(v_1, v_4)$ with $w=W+\frac 1 2, r=\infty$, and $(v_2, v_4)$ with $w=W, r=\frac 1 2$.

The initial minimum spanning tree of $G$ has weight of $W+1$. We consider as instance of the profit maximization problem on $G$ with a budget of $B+\frac 1 2$. An optimal solution for $G$ with this budget has spanning tree weight of $W+b+\frac 1 2$, thus the profit is $\Delta = b-\frac 1 2$. It is obtained by removing $(v_2,v_4)$ in addition to the $B$ edges of the optimal maximum components solution in $G_H$. Notice that by spending a cost of $B+1$ (which exceeds the budget), it is possible to remove $(v_2,v_3)$, and obtain a spanning tree with weight of $2W + 1$.

To demonstrate our claim, we analyze the performance of the algorithm in~\cite{LS17} on this example. The conclusion holds also for other similar methods, such as the one in~\cite{Zen15}. We choose a sufficiently large value $W > B+1$.
With budget $B+\frac 1 2$, the algorithm finds two integer solutions $R_1, R_2$ as follows: $R_1$ is the ``empty" solution $(w=0, MST=W+1)$, and $R_2$ is the over-budget solution $(w=B+1, MST=2W+1)$ obtained by removing $(v_2,v_3)$. Notice that the ``bang-per-buck" of $R_2$ is $\frac{W}{B+1} > 1$. For any other solution $R'$ so that $(v_2,v_3) \notin R'$, it is guaranteed that bang-per-buck is not greater than $1$ as the profit from removing any other edge cannot exceed its cost (either for $(v_2,v_4$) or any edges set in $G_H$). As there is no other solution above the connecting line between $R_1,R_2$ (and no other more expensive relevant solution), these solutions are two optimal solutions of the Lagrangian relaxation, for the Lagrange multiplier $\lambda = \frac{W}{B+1}$.

\usetikzlibrary{shapes.geometric, intersections}
\begin{figure}
\centering
\tikzset{mynode/.style={draw, very thick, circle, minimum size=1cm}, myarrow/.style={very thick}}

\begin{tikzpicture}
\node[mynode](v1) at (42:5.7){$v_1$};
\node[mynode](v2) at (3:4.3){$v_2$};
\node[mynode](v3) at (20:11){$v_3$};
\node[mynode](v4) at (0.5:10.3){$v_4$};

\node[ellipse, draw, very thick, minimum width = 3cm, minimum height = 3cm, dashed] (G_H) at (-2,2.5){$G_H$};

\draw[myarrow](v1)--(v4) node[midway, sloped, above, xshift=47pt] {\footnotesize $w=W+0.5, r=\infty$};
\draw[myarrow](v1)--(v2) node[midway, sloped, below] {\footnotesize $w=0, r=\infty$};
\draw[myarrow](v1)--(v3) node[midway,above] {\footnotesize $w=W, r=\infty$};
\draw[myarrow](v2)--(v4) node[midway,below] {\footnotesize $w=W, r=0.5$};
\draw[myarrow](v2)--(v3) node[midway, sloped, above, xshift=-48pt] {\footnotesize $w=0, r=B+1$};

\node (text) at (-2.1,0.5) {\footnotesize $w=0, r=1$};
\node (text) at (1.1,4) {\footnotesize $w=1, r=\infty$};

\draw[myarrow] (-1.5,3.7) -- (v1);    
\draw[myarrow] (-1.2,3.3) -- (v1);  
\draw[myarrow] (-1.3,1.3) -- (v1);  
\node at (1.2,3) {\Large $\vdots$}; 

\end{tikzpicture}
\caption{Bad Example for Previous Algorithms}
\end{figure}
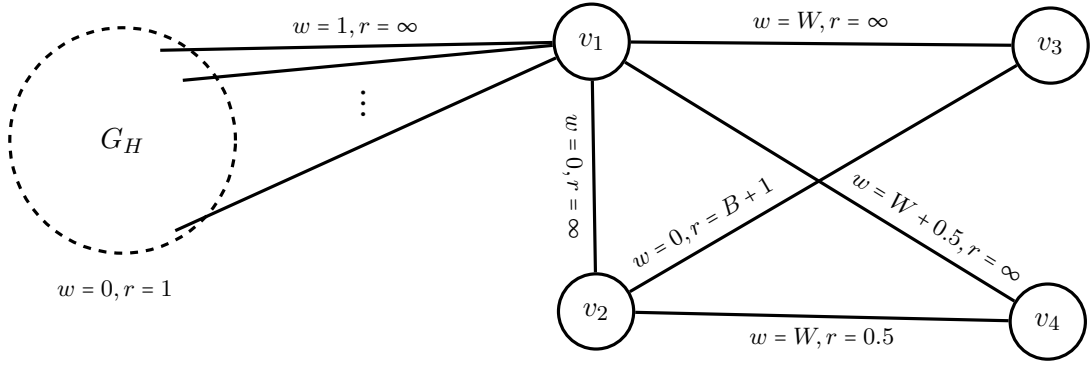

The algorithm chooses the best among the three options:
\begin{enumerate}
    \item Return a spanning tree with weight of at least $w_k=W + \frac 1 2$ (the smallest weight so that the graph without heavier edges is still connected under any removal of edges within the budget $B+\frac 1 2$). In our example this solution is obtained by removing $(v_2,v_4)$ so the MST weight is $W+\frac 3 2$.
      
    \item Return the empty solution $R_1$ (yielding the original minimum spanning tree of weight $W+1$).

    \item Return $R$, the trimmed version of $R_2$. The solution $R$ is created using a reduction to tree knapsack. It holds that $R \subset R_2$, and the cost of $R$ is no more than $B+\frac 1 2$. As $R_2 = \{(v_2,v_3)\}$, the only subset that does not exceed the profit is $R=\emptyset$, which again produces the trivial empty solution.
\end{enumerate}

Therefore, in our example the algorithm of~\cite{LS17} chooses the first option, obtaining a solution with spanning tree weight of $W+\frac 3 2$. As the optimal solution is $W+b+\frac 1 2$ (and $W > B+1 \geq b$), the algorithm indeed achieves the promised constant factor guarantee against the total cost of the tree. However, the algorithm achieves a profit of only $\frac 1 2$. The optimal profit is $b-\frac 1 2$, which can be arbitrarily large compared to $\frac 1 2$, depending on maximum components solution in $G_H$.

\section{The \boldmath$\varepsilon$-Protection Problem}\label{sec:def}

The analysis in Section~\ref{sec:eps-inc} implies a good defense against $\varepsilon$-increase.
Before presenting
the algorithm, we first formalize the problem. The input is a graph $G = (V,E)$, another set of edges
$E'$ over $V$, edge weights $w:E\cup E'\rightarrow \mathbb{R}^+$, edge construction
costs $b:E'\rightarrow \mathbb{R}^+$, and edge removal costs $c:E\cup E'\rightarrow \mathbb{R}^+$.
For a graph $G$, let $F^*(G)$ denote an optimal solution to the $\varepsilon$-increase problem
discussed above. Our goal in the $\varepsilon$-protection problem is to compute a set of edges
$S\subset E'$ to add to $G$ so that $c(F^*(G\cup S)) > c(F^*(G))$, minimizing the building cost $b(S)$.

We assume that adding any edge $e\in E'$ to $G$ does not reduce the weight of a minimum spanning
tree. Also, we allow parallel edges (so, for instance, pairs of nodes may be connected by an edge in $E$
and also by an edge in $E'$).

Based on the algorithm for $\varepsilon$-increase here is a simple approximation algorithm for
this problem. The first step is to list all the partial cuts that the $\varepsilon$-increase
algorithm considers, which have optimal cost. Notice that every cut that the algorithm computes
is derived from a global minimum cut of a subgraph of $G$. In that subgraph, there are at most
$\binom{|V|}{2}$ global minimum cuts, and those cuts can be enumerated efficiently. The number
of subgraphs to consider is $n-1$. Thus, the number of listed cuts is less than $n^3$. We want to
add at least one edge from $E'$ to every listed partial cut. It is possible to approximate an optimal
solution within a factor of $O(\log n)$ using the greedy approximation for weighted SET COVER.
Simply, associate with each edge in $E'$ the set of partial cuts it increases their cost, and then
approximate the minimum $b$-weight set of edges that covers all listed cuts.

\appendix
\section{Proofs Appendix}
\label{appendix:A}

\begin{proofof}{Lemma~\ref{MST_DELETED_EDGES_CLAIM}}
In this proof, we will use the so-called {\em blue rule}, which states the following:
suppose you have a graph $G$ and some of the edges of a minimum spanning
tree are colored blue. If you take any complete cut $C$ of $G$ that contains no blue
edge, and any edge $e \in C$ of minimum weight, then there exists a minimum spanning
tree of $G$ that contains all the blue edges and $e$.

Consider the edges $e \in T \setminus F$ in arbitrary order. The complete cuts $C_e = C_{T,e}$
are disjoint. Also, such an edge $e$ has minimum weight in $C_e$, and therefore also minimum
weight in $C_e\setminus F$. Thus, we can use the blue rule repeatedly in $G'$ to color all the
edges in $T \setminus F$ blue.
\end{proofof}

\begin{proofof}{Lemma~\ref{PROFIT_OF_CUT}}
If $W \leq w(e)$ then the claim is trivial, as the profit of a cut is non-negative. Thus, we may
assume that $W > w(e)$. Moreover, the worst case is when $e$ has minimum weight in $C_G(S)$,
because if the claim holds for a minimum weight edge then it holds also for all edges.

Clearly, if $e=(u,v)$ is a minimum weight edge in $C_G(S)$, then there exists a minimum spanning
tree $T$ of $G$ that contains $e$ (apply the blue rule to $C_G(S)$ and $e$).

Let $T$ be a minimum spanning tree of $G$ satisfying $e\in T$, and let $T'$ be a minimum
spanning tree of $G \setminus C$. As $e \in C_G(S)$ and $W > w(e)$, it holds that $e \notin T'$.
By adding $e$ to the tree $T'$, we create a cycle $P$. As $e\in P$ crosses $C(S)$, there
must be another edge $e'\in P$ that crosses $C_G(S)$. Clearly, $e'\in T'$ because the only
edge in $P\setminus T'$ is $e$. It holds that $w(e') \geq W$, because otherwise $e'\in C$
and therefore not in $T'$.

Assume for contradiction that $c(T') < c(T) + W - w(e)$. Replacing $e'$ with $e$ we create
a new spanning tree $T''$ of $G$ of weight
\[c(T'') \leq c(T') - w(e') + w(e) \leq c(T') - W + w(e) < c(T),\]
a contradiction to the fact that $T$ is a minimum spanning tree of $G$.
Thus, it holds that $c(T') \geq c(T) + W - w(e)$, and therefore $p_G(C)\ge W - w(e)$.
\end{proofof}

\begin{proofof}{Lemma~\ref{SUPERMODULARITY}}
If $G \setminus B \setminus \{e\}$ is not connected, then as $G \setminus B$ is connected by
assumption, we have $p_{G \setminus B}(e) = \infty \ge p_G(e)$, so the lemma holds. Thus, we
may assume that $G \setminus B \setminus \{e\}$ (and therefore also $G \setminus \{e\}$) is
connected.

Let $e = \{u,v\}$. We set $W$ to be the maximum over all $u$-$v$ cuts in $G \setminus \{e\}$
of the minimum weight edge crossing the cut. More formally,
$$
W = \max\{ \min \{ w(e'):\ e'\in C_{G \setminus \{e\}}(S) \}:\ S\subset V\wedge |\{u,v\} \cap S| = 1\}.
$$
We show that if $W \ge w(e)$, then $p_{G}(e) = W - w(e)$.
By Lemma~\ref{PROFIT_OF_CUT} we have $p_{G}(e) \ge W - w(e)$, so it suffices to prove the
reverse inequality.

Let $T$ be an arbitrary minimum spanning tree of $G$. If $e\not\in T$, then every edge $e'$ on
the path in $T$ connecting $u$ and $v$ must have $w(e')\le w(e)$, hence $W\le w(e)$ and the
claim holds vacuously if $W < w(e)$ and as $p_G(e) = 0$ if $W = w(e)$.
Otherwise, if $e\in T$, then
by Lemma~\ref{MST_DELETED_EDGES_CLAIM}, there exists a minimum spanning tree $T'$
of $G \setminus \{e\}$ so that $T' = T \cup \{e'\} \setminus \{e\}$ for an edge $e' \in E$. In
particular, $e' \in C_{T,e}$ is the minimum weight edge in this cut, and the partial cut
$\{e''\in C_{T,e}:\ w(e'') < w(e')\}$ is a candidate cut. Therefore $W \ge w(e')$ and
$p_{G}(e) =  w(e') - w(e) \le W - w(e)$.

Now, if $p_G(e) = 0$ then the assertion of the lemma is trivial. Otherwise, if $p_G(e) > 0$,
then $e$ must be contained in every minimum spanning tree of $G$. By the above
characterization of $p_G(e)$, we have that $p_G(e) = W - w(e)$, where $W$ is a minimum
weight of an edge in some cut $C_{G \setminus \{e\}}(S)$. As
$C' = C_{G \setminus B \setminus \{e\}}(S)\subset C_{G \setminus \{e\}}(S)$,
we have that $W' = \min\{w(e'):\ e'\in C'\} \ge W$. Using the same characterization
of $p_{G \setminus B}(e)$, we get $p_{G \setminus B}(e) \ge W' - w(e) \ge W - w(e) = p_G(e)$.
\end{proofof}

\begin{proofof}{Corollary~\ref{PROFIT_PRESERVATION}}
This is a simple application of Lemma~\ref{SUPERMODULARITY}, removing the
edges of $A = \{e_1, \ldots e_k\}$ one by one. Denote $A_0 = \emptyset$,
$A_1 = \{e_1\}$, $\dots$, $A_i =  \{e_1,e_2,\dots,e_i\}$, $\dots$, $A_k = \{e_1,e_2,\dots,e_k\}$.
By Lemma~\ref{SUPERMODULARITY}, for every $i=1,2,\dots,k$, it holds that
\[p_{G \setminus B \setminus A_{i-1}}(e_i) \ge p_{G \setminus A_{i-1}}(e_i).\]
Therefore,
\[p_{G \setminus B}(A) = \sum_{i=1}^{k} p_{G \setminus B \setminus A_{i-1}}(e_i) \ge
\sum_{i=1}^{k} p_{G \setminus A_{i-1}}(e_i) = p_{G}(A),\]
which completes the proof.
\end{proofof}

\begin{proofof}{Claim~\ref{cl: alg cost}} We show that the optimal solution $F^*$ is achieved at a partial cut considered by the
algorithm, and therefore the claim follows.

If $F^*$ disconnects $G$, then it is a global MIN CUT with respect to the edge costs
$c$. Moreover, all the edges in this cut have the same weight, otherwise removing just
the lightest edges would increase the weight of the minimum spanning tree, at lower
cost. Therefore, if the algorithm deals with one of the edges $e\in F^*$, one of the
feasible cuts it minimizes over is $F^*$.  Because $T$ is a spanning tree it must have
at least one edge in any complete cut in the graph, and specifically in $F^*$. In fact, in
this case the algorithm will output either $F^*$ or another global MIN CUT of the same
cost.

If $F^*$ does not disconnect $G$ we argue as follows. Let $T'$ be a minimum spanning
tree of $G \setminus F^*$. Recall that for every $e\in T \setminus T'$ there exists
$e'\in (T' \setminus T)\cap C_{T,e}$ such that $T-e+e'$ is a spanning tree. Let's consider
the edges $e\in T \setminus T'$ in arbitrary order, and let's choose $e' = \pi(e)$ that minimizes
$w(e')$ among all edges in $(T' \setminus T)\cap C_{T,e}$. Let $e_1$ denote the first edge
considered in $T \setminus T'$. As $T$ is a minimum spanning tree, $w(\pi(e_1))\ge w(e_1)$.
Denote $T_0=T$ and $T_1=T-e_1+\pi(e_1)$. If $w(\pi(e_1)) = w(e_1)$, we can repeat this
argument with $T_1$ and $T'$ to get $T_2$, and so forth. This process must reach an iteration
$i\le |T \setminus T'|$ at which $w(\pi(e_i)) > w(e_i)$, otherwise $w(T') = w(T)$, in contradiction
to the definition of $F^*$.

Now, consider the cut $C_{T_{i-1},e_i}$ in $G$. Notice that by construction, $T_{i-1}$ is also
a minimum spanning tree of $G$, because all exchanges prior to step $i$ did not increase the
weight of the tree. Thus, $w(e_i)$ is the minimum length of an edge in this cut. Also, $\pi(e_i)$
is an edge in this cut. By our choice of $\pi(e_i)$, none of the edges in the set
$F = \{e'\in C_{T_{i-1},e_i}:\ w(e') < w(\pi(e_i))\}$ are in $T'$. If there exists $e'\in F\setminus F^*$,
then the cycle closed by adding $e'$ to $T'$ must contain at least one other edge
$e''\in T'\cap C_{T_{i-1},e_i}$. However, all such edges have $w(e'') > w(e')$, in contradiction to
the assumption that $T'$ is a minimum spanning tree of $G \setminus F^*$. Thus, $F\subseteq F^*$.

Now, consider $F'\subseteq F$, putting $F' = \{e'\in C_{T_{i-1},e_i} | w(e') = w(e_i)\}$.
Let $T''$ be a minimum spanning tree of the graph $G \setminus F'$. Clearly, $w(T'') > w(T)$ and
$c(F')\le c(F^*)$. Hence, $F'$ is an optimal solution which contains all the minimum-weight edges
in the cut $C_{T_{i-1},e_i}$. Let's assume in contradiction that the minimum spanning tree $T$ that
the algorithm chooses and iterates over its edges maintains $T \cap F' = \emptyset$. Then there
exists an edge $e \in T$,  with $w(e) > w(e_i)$ that crosses the cut, and we can replace it and
create lighter spanning tree as $w(T - e + e_i) < w(T)$. This contradict $T$ being a minimum
spanning tree. We conclude that there is an edge $e \in T \cap F'$. Therefore, $F'$ is one of the
cuts that the algorithm optimizes over when considering $e$. The algorithm may choose a different
cut for $e$, but the chosen cut will not have cost greater than $c(F')$.
\end{proofof}

\begin{proofof}{Claim~\ref{cl: alg len}}
In this proof, we will use repeatedly the {\em blue rule}; see the proof of
Lemma~\ref{MST_DELETED_EDGES_CLAIM} for details.

Consider the best $e$ and the corresponding cut $C$ that determines the output of the
algorithm. There exists a minimum spanning tree $T$ of $G$ that contains $e$, because
we can apply the blue rule to $C$ and $e$. Let $S$ be the forest that remains of $T$ after
removing all the edges of length $w(e)$ in $C$. Clearly, $S$ has at least two connected
components, at least one on each side of the cut $C_{T,e}$ (be aware that this cut may
differ from $C$).

If the new graph is disconnected, then clearly the claim holds. Otherwise, $S$ can be
extended to a minimum spanning tree $T'$ of the new graph, as we can use the blue
rule to color blue each edge $f\in S$, using the cut $C_{T,f}$ that does not contain any
other edge of $T$ (Clearly $f$ has minimum length in this cut prior to the removal of edges,
and therefore also after the removal of edges). Let's assume for contradiction that
$w(T') = w(T)$.

Let $P$ be the cycle created by adding $e$ to $T'$. As $e$ crosses $C$, there must be
another edge $e\in P$ that crosses $C$. We must have that $w(e') > w(e)$, as we eliminated
from $C$ all the edges of length $w(e)$. By the assumption, it must be that $e'\in T$, otherwise
$w(T') > w(T)$ (exchanging $e'$ with $e$ reduces the cost of the tree, but $T$ is a minimum
spanning tree of $G$). Consider now $C_{T,e'}$. As $e'$ crosses $C_{T,e'}$, there must be
another $e''\in P$ that crosses $C_{T,e'}$. However, $e''\not\in T$, because
$C_{T,e'}\cap T = \{e'\}$ by definition. Thus, by our assumption $w(e'') = w(e) < w(e')$ (for the
same reason that, otherwise, exchanging $e''$ with $e$ reduces the length of $T'$, but we
assumed that $w(T') = w(T)$ and $T$ is a minimum spanning tree of $G$). This is a contradiction
to the assumption that $T$ is a minimum spanning tree, together with the implications that
$e'\in T$ and $e''\in C_{T,e'}\setminus\{e'\}$.
\end{proofof}

\bibliographystyle{plain}

\end{document}